\tikzset{
  agent/.style={draw, fill=blue!25!white,circle, minimum size=2mm,
  font=\footnotesize},
  x = 1.5cm, y=1.5cm,
  every loop/.style={},
}
\newcommand*\reals{\mathds{R}}
\newcommand\Sp[1]{\reals^{d_{#1}}}
\newcommand*\MAT{}
\newcommand*\PR{\mathds{P}}
\newcommand*\brMAT[1]{\breve{\MAT{#1}}}
\DeclareMathOperator\DARE{DARE}
\DeclareMathOperator\VAR{var}
\DeclareMathOperator\TR{Tr}
\let\citep\cite
\newcommand*\DELETE[1]{}
\newcommand*\EXP{\mathbb{E}}
\newcommand*\TRANS{{\mathpalette\doTRANS\empty}}
\newcommand*\doTRANS[2]{\raisebox{\depth}{$\m@th#1\intercal$}}
\newcommand*\VVEC{}
\DeclareMathOperator{\VEC}{vec}
\DeclareMathOperator{\DIAG}{diag}
\newcommand*\iell{i^\ell_*}
\newcommand*\Iell{\ell,\iell}
\newcommand*\tildeO{\tilde{\mathcal{O}}}
\newcommand{\red}[1]{\textcolor{black}{#1}}
\newcommand{\blue}[1]{\textcolor{black}{#1}}
\title{Scalable regret for learning to control network-coupled
subsystems with unknown dynamics}
\author{Sagar Sudhakara, Aditya Mahajan, Ashutosh Nayyar, and Yi Ouyang%
  \thanks{Sagar Sudhakara and Ashutosh Nayyar are with the Department of Electrical and Computer Engineering, University of Southern California, Los Angeles, CA, USA. (email: sagarsud@usc.edu, ashutosn@usc.edu)}%
  \thanks{Aditya Mahajan is with the department of Electrical and Computer Engineering, McGill University, Montreal, QC, Canada. (email: aditya.mahajan@mcgill.ca)}%
  \thanks{Yi Ouyang is with Preferred Networks America, Burlingame, CA, USA (email: ouyangyi@preferred-america.com)}%
\thanks{The work of Aditya Mahajan was supported in part by  the Innovation for Defence Excellence and Security (IDEaS) Program of the Canadian Department of National Defence through grant CFPMN2-30.}%
}
\begin{document}
\maketitle

\begin{abstract}
 We consider the problem of controlling  an unknown linear quadratic Gaussian (LQG) system consisting of multiple subsystems connected over a network. Our goal is to minimize and quantify the regret (i.e. loss in performance) of our  strategy with respect to an oracle who knows the system model. Viewing the interconnected subsystems globally and directly using existing LQG learning algorithms for the global system results in a regret that increases super-linearly with the number of subsystems. Instead,  we propose a new Thompson sampling based learning algorithm which exploits the structure of the underlying network. We show that the expected regret of the proposed algorithm is bounded by  $\tilde{\mathcal{O}} \big( n \sqrt{T} \big)$ where $n$ is the number of subsystems,  $T$ is the time horizon and the $\tilde{\mathcal{O}}(\cdot)$ notation hides logarithmic terms in $n$ and~$T$. Thus, the regret scales linearly with the number of subsystems. We present numerical experiments to illustrate the salient features of the proposed algorithm.
\end{abstract}

\begin{IEEEkeywords}
  Linear quadratic systems, networked control systems, reinforcement learning, Thompson sampling.
\end{IEEEkeywords}

\section{Introduction}

Large-scale systems comprising of multiple subsystems connected over a network arise in a number of applications including power systems, traffic networks, communication networks and some economic systems \cite{sandell1978}. A common feature of such systems is the coupling in their subsystems' dynamics and costs, i.e., the state evolution and local costs of one subsystem depend not only on its own state and control action but also on the states and control actions of other subsystems in the network.  Analyzing various aspects of the behavior of such systems and designing control strategies for them under a variety of settings have been long-standing problems of interest in the systems and control literature~\citep{lunze1986dynamics,sundareshan1991qualitative,yang1995structural,hamilton2012patterned,arabneydi2015mft,arabneydi2016mft}. 
However, there are still many unsolved challenges, especially on the interface between learning and control in the context of these large-scale systems.

In this paper, we investigate the problem of designing control strategies for large-scale network-coupled subsystems when some parameters of the system model are not known. {Due to the unknown parameters, the control problem is also a learning problem. We adopt a reinforcement learning framework} for this problem with the goal of  minimizing and quantifying the regret (i.e. loss in performance) of our learning-and-control strategy with respect to the optimal control strategy based on the complete knowledge of the system model.

The networked system we consider follows linear dynamics with quadratic costs
and Gaussian noise. Such linear-quadratic-Gaussian (LQG) systems are one of
the most commonly used modeling framework in numerous control applications.
Part of the appeal of LQG models is the simple structure  of the optimal
control strategy when the system model is completely known---the optimal
control action  in this case  is a linear or affine function of the state---which makes the optimal strategy easy to identify and easy to implement. If some
parameters of the model are not fully known during the design phase or may
change during operation, then it is better to design a strategy that learns
and adapts online. Historically, both adaptive
control~\citep{astrom1994adaptive} and
reinforcement learning~\citep{bradtke1993reinforcement, bradtke1994adaptive} have been used to design asymptotically
optimal learning algorithms for such LQG systems. In recent years, there has
been considerable interest in analyzing the transient behavior of such
algorithms which can be quantified in terms of the \emph{regret} of the
algorithm  as a function of time. This allows one to assess, as a function
of time, the performance of a  learning algorithm   compared to an oracle who
knows the system parameters upfront.

Several learning algorithms have been proposed for LQG systems \citep{campi1998adaptive,
abbasi2011regret, faradonbeh2017finite,cohen2019learning,abeille2020efficient, dean2018regret,mania2019certainty,faradonbeh2020input,simchowitz2020naive,faradonbeh2020adaptive,ouyang2017control,ouyang2019posterior,abeille2018improved}, and in most cases the regret is shown to be bounded by  $\tilde
{\mathcal{O}}( d_x^{0.5} (d_x + d_u) \sqrt{T})$, where $d_x$ is the dimension
of the state, $d_u$ is the dimension of the controls, $T$ is the time
horizon, and the $\tilde{\mathcal{O}}(\cdot)$ notation hides logarithmic
terms in~$T$.
Given the lower bound of $\tilde{\Omega}(d_x^{0.5} d_u
\sqrt{T})$ (where $\tilde \Omega(\cdot)$ notation hides  logarithmic terms in~$T$) for regret in LQG systems identified in a recent work \cite{simchowitz2020naive}, the regrets of the existing algorithms have near optimal scaling in terms of time and dimension. However, when directly applied to a networked system with  $n$ subsystems, these algorithms would incur $\tilde{\mathcal{O}}( n^{1.5} d_x^{0.5} ({d_x + d_u}) \sqrt{T})$ regret because the effective
dimension of the state and the controls is $nd_x$ and $nd_u$, where  $d_x$ and $d_u$ are the dimensions of each subsystem.
This super-linear dependence on $n$ is prohibitive in large-scale networked systems  because the regret per subsystem (which is $\tildeO(\sqrt{n})$) grows with the number of subsystems.

The learning algorithms mentioned above are for a general LQG system and do not take into account any knowledge of the underlying network structure.  
Our main contribution is to show that by exploiting the structure of the network model, it is possible to design learning algorithms for large-scale network-coupled subsystems where the regret does not grow super-linearly in the number of subsystems.
In particular, we utilize a spectral decomposition technique, recently proposed in~\cite{gao2019networked}, to decompose the large-scale system into $L$ decoupled systems, where $L$ is the rank of the coupling matrix corresponding to the underlying network. 
Using the decoupled systems, we propose a Thompson sampling based algorithm with $\tilde{\mathcal{O}}( n\allowbreak d_x^{0.5} ({d_x + d_u}) \sqrt{T})$ regret bound.

\paragraph{Related work}

Broadly speaking, three classes of low-regret learning algorithms have been
proposed for LQG systems:
certainty equivalence (CE) based algorithms, optimism in the face of uncertainty (OFU) based algorithms, and Thompson sampling (TS) based
algorithms.
CE is a classical adaptive control algorithm~\citep{astrom1994adaptive}.  
Recent papers~\citep{dean2018regret,mania2019certainty,faradonbeh2020input,simchowitz2020naive,faradonbeh2020adaptive} have established near optimal high probability bounds on regret for CE-based algorithms.
OFU-based algorithms are inspired by the OFU principle for multi-armed
bandits~\citep{auer2002finite}. 
Starting with the work of~\citep{campi1998adaptive,
abbasi2011regret}, most of the papers following the OFU
approach~\citep{faradonbeh2017finite,cohen2019learning,abeille2020efficient} also provide similar high probability regret bounds.
TS-based algorithms are inspired by TS 
algorithm for multi-armed bandits~\citep{agrawal2012analysis}. 
Most papers following this
approach~\citep{ouyang2017control,ouyang2019posterior,abeille2018improved,faradonbeh2020adaptive} establish bounds on expected Bayesian regret of similar near-optimal orders.
As argued earlier, most of these papers show that the regret scales super-linearly with the number of subsystems and are, therefore, of limited value for large-scale systems.

  There is an emerging literature on learning algorithms for networked systems both for 
  LQG models~\cite{wang2020distributed,jing2021learning,li2019distributed}
  and
  MDP models~\cite{zhang2018fully,zhang2019decentralized,NIPS2014_0deb1c54,chen2021efficient}.
The papers on LQG models propose distributed value- or policy-based learning algorithms and analyze their convergence properties, but they do not characterize their regret. Some of the papers on MDP models~\cite{NIPS2014_0deb1c54,chen2021efficient} do characterize regret bounds for OFU and TS-based learning algorithms but these bounds are not directly applicable to the LQG model considered in this paper.

An important special class of network-coupled systems is mean-field coupled subsystems~\cite{HuangCainesMalhame_2007,LasryLions_2007}.
There has been
considerable interest in reinforcement learning for mean-field
models~\citep{subramanian2019reinforcement, subramanian2020multi, uz2020reinforcement}, but most of the literature does not consider regret. The basic mean-field coupled model can be viewed as a special case of the network-coupled subsystems considered in this paper (see Sec.~\ref{sec:mf}). In a preliminary version of this paper~\citep{gagrani2020thompson}, we proposed a TS-based algorithm for mean-field coupled subsystems which has a $\tildeO( (1+1/n)\sqrt{T})$ regret per subsystem.
The current paper extends the TS-based algorithm to general network-coupled subsystems and establishes scalable regret bounds for arbitrarily coupled networks.

\paragraph{Organization}
The rest of the paper is organized as follows. In Section~\ref{sec:model},
we introduce the model of network-coupled subsystems.
In Section~\ref{sec:spectral}, we summarize the spectral decomposition idea and the resulting scalable method for synthesizing optimal control strategy when the model parameters are known.
Then, in Section~\ref{sec:learning}, we consider the learning problem for unknown network-coupled subsystems and present a TS-based learning algorithm with scalable regret bound. We subsequently provide regret analysis in Section~\ref{sec:proof} and numerical experiments in Section~\ref{sec:examples}. We conclude in Section~\ref{sec:conclusion}.

\paragraph{Notation}
The notation $\MAT A = [a^{ij} ]$ means that $a^{ij}$ is
the $(i, j)$th element of the matrix $\MAT A$. For a matrix $\MAT A$, $\MAT A^\TRANS$ denotes its transpose. Given matrices (or vectors) $A_1$, \dots, $A_n$ with the same number of rows, $[A_1, \dots, A_n]$ denotes the matrix formed by horizontal concatenation.
For a random vector $v$, $\VAR(v)$ denotes its covariance matrix. The notation $\mathcal{N}(\mu, \Sigma)$ denotes the  multivariate Gaussian distribution with mean vector $\mu$ and covariance matrix $\Sigma$.

For stabilizable $(\MAT A,\MAT B)$ and positive definite matrices $Q, R$, DARE$(A,B,Q,R)$ denotes the unique positive semidefinite solution of the discrete time algebraic Riccati equation (DARE), which is given as 
\[S=A^\TRANS SA - (A^\TRANS SB)(R+B^\TRANS SB)^{-1}(B^\TRANS SA)+Q.\]

\section{Model of network-coupled subsystems}\label{sec:model}
We start by describing a minor variation of a model of network-coupled
subsystems proposed in~\cite{gao2019networked}. The model in ~\cite{gao2019networked} was described in continuous time. We translate the model and the results to discrete time.

\subsection{System model}

\subsubsection{Graph stucture}
Consider a network consisting of $n$ subsystems/agents connected over an undirected weighted graph denoted by $\mathcal{G}(N,E,W_{\mathcal{G}})$, where $N=\{1,\dots,n\}$ is the set of nodes, $E\subseteq N\times N$ is the  set of edges, and $W_{\mathcal{G}} =[w^{ij}]\in \mathds{R}^{n\times n}$ is the weighted adjacency matrix. Let $M=[m^{ij}] \in \mathds{R}^{n\times n}$ be  a symmetric coupling matrix corresponding to the underlying graph $\mathcal{G}$. For instance, $M$ may represent the underlying adjacency matrix (i.e., $M=W_{\mathcal{G}}$) or the underlying Laplacian matrix (i.e., $M=\DIAG(W_{\mathcal{G}}\mathds{1}_n)-W_{\mathcal{G}}$). 

\subsubsection{State and dynamics}
The states and control actions of agents take values in $\Sp{x}$ and $\Sp{u}$, respectively. For agent $i \in N$, we use $x^i_t \in \Sp{x}$ and $u^i_t \in \Sp{u}$ to denote its state and
control action at time~$t$. 

The system starts at a random initial state $x_1 = (x^i_1)_{i \in N}$, whose
components are independent across agents. For agent~$i$,
the initial state $x^i_1 \sim \mathcal N(0, \MAT \Xi^i_1)$, and at any time~$t \ge
1$, the state evolves according to 
\begin{equation}\label{eq:dynamics}
  x^i_{t+1} = \MAT A x^i_{t} + \MAT B u^i_{t} + 
  \MAT D  x^{\mathcal{G},i}_t + \MAT E u^{\mathcal{G},i}_t + 
  w^i_t,
\end{equation}
where $x^{\mathcal{G},i}_t$ and $u^{\mathcal{G},i}_t$ are the locally
perceived influence of the network  on the state of agent~$i$ and are given by
\begin{equation}\label{local_perceived}
    x^{\mathcal{G},i}_t=\sum_{j \in N}m^{ij}x^j_{t}
    \quad\text{and}\quad
    u^{\mathcal{G},i}_t=\sum_{j \in N}m^{ij}u^j_{t}
    ,
\end{equation}
$\MAT A$, $\MAT B$, $\MAT D$, $\MAT E$
are matrices of appropriate dimensions, and $\{w^i_t\}_{t \ge 1},i \in N,$ are
i.i.d.\@ zero-mean Gaussian processes which are independent of each other and
the initial state. In particular, $w^i_t \in \Sp{x}$ and $w^i_t \sim \mathcal
N(0, \MAT W)$.  We call $x^{\mathcal{G},i}_t$ and $u^{\mathcal{G},i}_t$ the
\emph{network-field} of the states and control actions at node~$i$ at time~$t$.

Thus, the next state of agent~$i$ depends on its
current local state and control action, the current network-field of the states and
control actions of the system, and the current local noise.

We follow the same atypical representation of the ``vectorized'' dynamics as used
in~\cite{gao2019networked}. Define $x_t$ and $u_t$ as the global state and
control actions of the system:
\[
  x_t=[x^1_t,\dots.,x^n_t]
    \quad\text{and}\quad
    u_t=[u^1_t,\dots.,u^n_t].
\]
We also define $w_t=[w^1_t,\dots.,w^n_t]$.
Similarly, define $x^{\mathcal{G}}_t$ and $u^{\mathcal{G}}_t$ as the global
network field of states and actions:
\[
  x^{\mathcal{G}}_t=[x^{\mathcal{G},1}_t,\dots.,x^{\mathcal{G},n}_t]
  \quad\text{and}\quad
  u^{\mathcal{G}}_t=[u^{\mathcal{G},1}_t,\dots.,u^{\mathcal{G},n}_t].
\]
Note that $x_t, x^{\mathcal{G}}_t,w_t\in \mathbb{R} ^{d_x\times n}$ and $u_t,
u^{\mathcal{G}}_t\in \mathbb{R} ^{d_u\times  n}$ are matrices and not vectors. 
The global system dynamics may be written as:
\begin{equation}\label{eq:dynamics2}
     x_{t+1} = \MAT A x_{t} + \MAT B u_t + 
  \MAT D  x^\mathcal{G}_t + \MAT E u^\mathcal{G}_t + 
  w_t. 
\end{equation}
Furthermore, we may write 
\[
  x^{\mathcal{G}}_t = x_tM^\TRANS=x_tM
  \quad\text{and}\quad
  u^{\mathcal{G}}_t=u_tM^\TRANS=u_tM.
\]
\subsubsection{Per-step cost}
At any time $t$ the system incurs a per-step cost given by
\begin{equation}\label{eq:cost}
  c(x_t,u_t)=\sum_{i\in N}\sum_{j\in N}[h_x^{ij}(x^i_t)^\TRANS Q(x^j_t)+h_u^{ij}(u^i_t)^\TRANS R(u^j_t)]
\end{equation}
where $Q$ and $R$ are matrices of appropriate dimensions and $h_x^{ij}$ and
$h_u^{ij}$ are real valued weights. Let $H_x = [ h_x^{ij} ]$ and $H_u= [h_u^{ij}]$. It is
assumed that the weight matrices $H_x$ and $H_u$ are polynomials of $M$, i.e.,
\begin{equation}\label{eq:poly}
  H_x=\sum_{k=0}^{K_x}q_kM^k
  \quad\text{and}\quad
  H_u=\sum_{k=0}^{K_u}r_kM^k
\end{equation}
where $K_x$ and $K_u$ denote the degrees of the polynomials and
$\{q_k\}^{K_x}_{k=0}$ and $\{r_k\}_{k=0}^{K_u}$ are real-valued coefficients. 

The assumption that $H_x$ and $H_u$ are polynomials of $M$ captures the intuition that the
per-step cost respects the graph structure.  In the special case when $H_x =H_u=I$,
the per-step cost is decoupled across agents. When $H_x =H_u = I + M$,
 the per-step cost captures a cross-coupling between one-hop neighbors.
Similarly, when $H_u = I + M + M^2$, the per-step cost captures a cross-coupling
between one- and two-hop neighbors. See~\cite{gao2019networked} for more
examples of special cases of the per-step cost defined above.

\subsection{Assumptions on the model}

Since $M$ is real and symmetric, it has real eigenvalues. 
Let $L$ denote the rank of $M$
and $\lambda^1,\dots.,\lambda^L$ denote the non-zero eigenvalues. For ease of notation, for
$\ell \in {1,\dots,L}$, define
\[
  q^\ell=\sum_{k=0}^{K_x}q_k(\lambda^\ell)^k 
  \quad\text{and}\quad
  r^\ell=\sum_{k=0}^{K_u}r_k(\lambda^\ell)^k ,
\]
where $\{q_k\}^{K_x}_{k=0}$ and $\{r_k\}^{K_u}_{k=0}$ are the coefficients in~\eqref{eq:poly}.
Furthermore, for $\ell \in \{1, \dots, L \}$, define:
\[
  \MAT A^\ell = {\MAT A} + \lambda^\ell{\MAT D}
  \quad\text{and}\quad
  \MAT B^\ell = {\MAT B} + \lambda^\ell{\MAT E}.
\]

We impose the following assumptions:
\begin{description}
  \item[(A1)] The systems $(\MAT A, \MAT B)$ and $\{ (\MAT A^\ell, \MAT
    B^\ell) \}_{\ell=1}^L$ are stabilizable.
    \item[(A2)] The matrices $Q$ and $R$ are symmetric and positive definite.
    \item[(A3)] The parameters $q_0$, $r_0$, $\{q^\ell\}_{\ell=1}^L$, and $\{r^\ell\}_{\ell = 1}^L$ are strictly
    positive.
\end{description}

Assumption (A1) is a standard assumption. Assumptions (A2) and (A3) ensure
that the per-step cost is strictly positive.

\subsection{Admissible policies and performance criterion}
 There is a system operator who has access to the state and action histories of all
agents  and who selects the agents' control actions according to a
deterministic or randomized (and potentially history-dependent) policy
\begin{equation}
   u_t = \pi_t( x_{1:t},  u_{1:t-1}).
\end{equation}

Let $\VVEC \theta^\TRANS = [\MAT A, \MAT B, \MAT D, \MAT E]$ denote the
parameters of the system dynamics. The performance of any policy $\pi =
(\pi_1, \pi_2, \dots)$ is measured by the long-term average cost given by
\begin{equation} \label{eq:cost_performance}
  J(\pi; \VVEC \theta) = \limsup_{T \to \infty} \frac {1}{T}
    \EXP^\pi\biggl[ \sum_{t=1}^{T} c(x_t, u_t)  \biggr].
\end{equation}
Let $J(\VVEC \theta)$ denote the minimum of $J(\pi; \VVEC \theta)$ over all
policies.

We are interested in the setup where the graph coupling matix $M$, the cost coupling
matrices $H_x$ and $H_u$, and the cost matrices $Q$ and $R$ are known but the system dynamics $\VVEC \theta$ are
unknown and there is a prior distribution on $\VVEC \theta$. 
The Bayesian \emph{regret} of a policy $\pi$ operating for a
horizon $T$ is defined as 
\begin{equation}
  R(T; \pi) \coloneqq
  \EXP^\pi \biggl[ \sum_{t=1}^T c(x_t, u_t) 
  - T J(\VVEC \theta) \biggr],
\end{equation}
where the expectation is with respect to the prior on $\theta$, the noise
processes, the initial
conditions, and the potential randomizations done by the policy~$\pi$.

\section{Background on spectral decomposition of the system}\label{sec:spectral}
In this section, we summarize the main results of~\cite{gao2019networked},
translated to the discrete-time model used in this paper.

The spectral decomposition described in~\cite{gao2019networked} relies on the spectral
factorization of the graph coupling matrix~$M$.
Since $M$ is a real and symmetric matrix with rank $L$, we can write it as
\begin{equation}\label{eq:spectral_fact}
  M=\sum_{\ell=1}^{L}\lambda^\ell v^\ell (v^{\ell})^\TRANS,
\end{equation}
where $(\lambda^1,\dots,\lambda^L)$ are the non-zero eigenvalues of $M$ and
$(v^1,\dots,v^L)$ are the corresponding orthonormal eigenvectors.

We now present the decomposition of the dynamics and the cost based
on~\eqref{eq:spectral_fact} as described in~\cite{gao2019networked}.

\subsection{Spectral decomposition of the dynamics and per-step cost}
For $\ell \in \{1,2,\dots,L\}$, define \emph{eigenstates} and
\emph{eigencontrols} as
\begin{equation}\label{eq:xell}
    x^\ell_t=x_t v^\ell (v^{\ell})^\TRANS
    \quad\text{and}\quad
    u^\ell_t=u_t v^\ell (v^{\ell})^\TRANS,
\end{equation}
respectively. Furthermore, 
define \emph{auxiliary state} and \emph{auxiliary control} as
\begin{equation}\label{eq:xbreve}
    \breve x_t=x_t-\sum_{\ell=1}^{L} x^\ell_t
    \quad\text{and}\quad
    \breve u_t=u_t-\sum_{\ell=1}^{L} u^\ell_t,
\end{equation}
respectively. Similarly, define $w^\ell_t = w_t v^\ell (v^{\ell})^\TRANS$ and
$\breve w_t = w_t - \sum_{\ell = 1}^L w^\ell_t$. 
Let $x^{\ell,i}_t$ and $u^{\ell,i}_t$ denote the $i$-th column of $x^\ell_t$ and $u^\ell_t$ respectively; thus we can write
\[
  x^\ell_t=[x^{\ell,1}_t,\dots.,x^{\ell,n}_t]
  \quad\text{and}\quad
  u^\ell_t=[u^{\ell,1}_t,\dots.,u^{\ell,n}_t].
\]
Similar interpretations hold for $w^{\ell,i}_t$ and $\breve w^i_t$.
Following~\cite[Lemma~2]{gao2019networked}, we can show that for any $i \in N$,
\begin{equation}\label{eq:var}
  \VAR(w^{\ell,i}_t) = (v^{\ell,i})^2 W
  \quad\text{and}\quad
  \VAR(\breve w^i_t) = (\breve v^i)^2 W,
\end{equation}
where $(\breve v^i)^2 = 1 - \sum_{\ell=1}^L (v^{\ell,i})^2$. These covariances
do not depend on time because the noise processes are i.i.d.

Using the same argument as in~\cite[Propositions~1 and~2]{gao2019networked}, we can show the following.
\begin{proposition}\label{prop:spectral}
  For each node $i \in N$, the state and control action may be decomposed as
  \begin{equation}\label{eq:dec1}
    x^i_t=\breve x^i_t+\sum_{\ell=1}^{L} x^{\ell,i}_t 
    \quad\text{and}\quad
    u^i_t=\breve u^i_t+\sum_{\ell=1}^{L} u^{\ell,i}_t
  \end{equation}
  where the dynamics of eigenstate $x^{\ell,i}_t$   depend only on $u^{\ell,i}_t$ and $w^{\ell,i}_t$, and are given by
  \begin{equation}
    \label{eq:field-dynamics}
    x^{\ell,i}_{t+1} = \MAT (A+\lambda^\ell D) x^{\ell,i}_t + \MAT (B+\lambda^\ell E) u^{\ell,i}_t + w^{\ell,i}_t ,
  \end{equation}
  and the dynamics of the auxiliary state $\breve x^i_t$ depend  only on   $\breve u^i_t$ and $\breve w^i_t$  and are given by 
  \begin{equation}
    \label{eq:aux-dynamics}
    \breve x^i_{t+1}= \MAT A \breve x^i_t + \MAT B \breve u^i_t + \breve w^i_t.
  \end{equation}
    Furthermore, the per-step cost decomposes as follows:
  \begin{equation}
    c(x_t, u_t) = \sum_{i \in N} \biggl[ q_0 \breve c(\breve x^i_t, \breve u^i_t) 
    + \sum_{\ell=1}^L q^\ell c^\ell(x^{\ell,i}_t, u^{\ell,i}_t) \biggr]
  \end{equation}
  where\footnote{Recall that (A3) ensures that $q_0$ and
  $\{q^\ell\}_{\ell=1}^L$ are strictly positive.}
  \begin{align*}
    \breve c(\breve x^i_t,\breve u^i_t) &= 
     ({\breve x^{i}_t})^\TRANS \MAT Q \breve x^i_t 
     +  \frac{r_0}{q_0} ({\breve u^{i}_t})^\TRANS \MAT R \breve u^i_t, \\
    c^\ell(x^{\ell,i}_t, u^{\ell,i}_t) &=
    (x^{\ell,i}_t)^\TRANS \MAT Q x^{\ell,i}_t + \frac{r^\ell}{q^\ell}  (u^{\ell,i}_t)^\TRANS \MAT R u^{\ell,i}_t.
  \end{align*}
\end{proposition}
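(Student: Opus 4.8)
The plan is to prove the three assertions in order: first the algebraic decomposition of the state and control in~\eqref{eq:dec1}, then the decoupled dynamics~\eqref{eq:field-dynamics} and~\eqref{eq:aux-dynamics}, and finally the decomposition of the per-step cost. The unifying tool throughout is the spectral factorization $M = \sum_{\ell=1}^L \lambda^\ell v^\ell (v^\ell)^\TRANS$ in~\eqref{eq:spectral_fact}, together with the orthonormality of the eigenvectors $(v^1,\dots,v^L)$.

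First I would establish~\eqref{eq:dec1}. This is immediate from the definitions: from~\eqref{eq:xbreve} we have $x_t = \breve x_t + \sum_{\ell=1}^L x^\ell_t$, and reading off the $i$-th column of each matrix gives the stated column-wise identity, since $x^{\ell,i}_t$ and $\breve x^i_t$ were defined as the $i$-th columns of $x^\ell_t$ and $\breve x_t$. The same argument applies verbatim to the control.

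Next I would derive the eigenstate dynamics~\eqref{eq:field-dynamics}. The key idea is to right-multiply the global dynamics~\eqref{eq:dynamics2} by the rank-one projector $v^\ell (v^\ell)^\TRANS$ and use $x^\mathcal{G}_t = x_t M$ and $u^\mathcal{G}_t = u_t M$. The crucial computation is $M v^\ell (v^\ell)^\TRANS = \lambda^\ell v^\ell (v^\ell)^\TRANS$, which follows from the factorization~\eqref{eq:spectral_fact} and orthonormality (all cross terms $(v^k)^\TRANS v^\ell$ vanish for $k \ne \ell$, and $(v^\ell)^\TRANS v^\ell = 1$). Consequently $x^\mathcal{G}_t v^\ell (v^\ell)^\TRANS = x_t M v^\ell (v^\ell)^\TRANS = \lambda^\ell x^\ell_t$, and similarly for the control term. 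Collecting terms yields $x^\ell_{t+1} = (\MAT A + \lambda^\ell \MAT D) x^\ell_t + (\MAT B + \lambda^\ell \MAT E) u^\ell_t + w^\ell_t$ at the matrix level; reading off the $i$-th column gives~\eqref{eq:field-dynamics}. For the auxiliary dynamics~\eqref{eq:aux-dynamics}, I would subtract $\sum_\ell x^\ell_{t+1}$ from $x_{t+1}$; the network-field contributions telescope because $M(I - \sum_\ell v^\ell(v^\ell)^\TRANS) = 0$ on the range of $M$, leaving only the $\MAT A$ and $\MAT B$ terms acting on $\breve x_t$ and $\breve u_t$, with noise $\breve w_t$. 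The main obstacle here is bookkeeping: one must verify that the projection of $x^\mathcal{G}_t = x_t M$ onto the auxiliary (orthogonal-complement) subspace genuinely vanishes, which rests on the fact that the columns of $M$ lie in $\mathrm{span}(v^1,\dots,v^L)$, so $M$ annihilates the complementary projector $\breve P = I - \sum_\ell v^\ell (v^\ell)^\TRANS$.

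Finally I would prove the cost decomposition. The approach is to substitute the orthogonal decompositions of $x_t$ and $u_t$ into~\eqref{eq:cost} and exploit~\eqref{eq:poly}, writing $H_x = \sum_k q_k M^k$. Acting with $M^k$ on the eigenspaces gives $M^k v^\ell = (\lambda^\ell)^k v^\ell$, so $H_x$ acts as the scalar $q^\ell = \sum_k q_k (\lambda^\ell)^k$ on the $\ell$-th eigenspace and as $q_0$ on the auxiliary subspace (where $M$ acts as zero). Because the projectors $v^\ell(v^\ell)^\TRANS$ and $\breve P$ are mutually orthogonal, all cross terms between distinct eigen-components and the auxiliary component vanish in the quadratic form, and the double sum over $i,j$ collapses into a single sum over $i$ of the quadratic forms $q^\ell (x^{\ell,i}_t)^\TRANS \MAT Q x^{\ell,i}_t$ and $q_0 (\breve x^i_t)^\TRANS \MAT Q \breve x^i_t$, with the analogous terms for the control using $r^\ell$ and $r_0$. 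Factoring out $q^\ell$ and $q_0$ and absorbing the ratios $r^\ell/q^\ell$ and $r_0/q_0$ into the definitions of $c^\ell$ and $\breve c$ yields the stated form. I expect this cost step to be the most delicate, since it requires carefully tracking how the weight-matrix polynomials interact with the orthogonal projectors and confirming that the cross terms annihilate; this is precisely where assumption~(A3) is invoked to guarantee the extracted coefficients are positive, so that $c^\ell$ and $\breve c$ are genuine (positive-definite) quadratic costs.
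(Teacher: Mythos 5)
Your proposal is correct and follows essentially the same route as the paper, which does not spell out the details but simply invokes the spectral-projection argument of Propositions 1 and 2 of the cited reference (Gao et al.): right-multiplication of the vectorized dynamics by the orthogonal projectors $v^\ell (v^\ell)^\TRANS$ and $I - \sum_\ell v^\ell (v^\ell)^\TRANS$, together with the identity $H_x = q_0\bigl(I - \sum_\ell v^\ell (v^\ell)^\TRANS\bigr) + \sum_\ell q^\ell v^\ell (v^\ell)^\TRANS$ for the cost. Your reconstruction of that argument, including the vanishing of cross terms by orthogonality of the projectors, is sound; the only minor clarification is that (A3) is not needed for the cross terms to cancel but only to justify dividing by $q_0$ and $q^\ell$ when normalizing $\breve c$ and $c^\ell$.
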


\subsection{Planning solution for network-coupled subsystems} \label{sec:planning}

We now present the main result of~\cite{gao2019networked}, which provides a
scalable method to synthesize the optimal control policy when the system
dynamics are known. 

Based on Proposition~\ref{prop:spectral}, we can view the overall system as
the collection of the following subsystems:
\begin{itemize}
  \item \emph{Eigen-system $(\ell,i)$}, $\ell \in \{1, \dots, L\}$ and $i \in
    N$ with state $x^{\ell,i}_t$, controls $u^{\ell,i}_t$,
    dynamics~\eqref{eq:field-dynamics}, and per-step cost $q^\ell
    c^\ell(x^{\ell,i}, u^{\ell,i})$.
  \item \emph{Auxiliary system~$i$}, $i \in N$, with state $\breve x^i_t$,
    controls $\breve u^i_t$, dynamics~\eqref{eq:aux-dynamics}, and per-step
    cost $\breve q_0 \breve c(\breve x^i_t, \breve u^i_t)$. 
\end{itemize}

Let $(\theta^\ell)^\TRANS = [ A^\ell, B^\ell ] \coloneqq [ (A + \lambda^\ell
D), (B + \lambda^\ell E) ]$, $\ell \in \{1, \dots, L\}$, and $\breve
\theta^\TRANS = [ A, B]$ denote the parameters of the dynamics of the eigen
and auxiliary systems, respectively. Then, for any policy $\pi = (\pi_1,
\pi_2, \dots)$, the performance of the eigensystem $(\ell, i)$, $\ell \in \{1,
\dots, L\}$ and $i \in N$, is given by $q^\ell J^{\ell,i}(\pi;\theta^\ell)$,
where
\begin{equation*}
  J^{\ell,i}(\pi;\theta^\ell) = \limsup_{T \to \infty} \frac {1}{T}
  \EXP^\pi\biggl[ \sum_{t=1}^{T} c(x^{\ell,i}_t, u^{\ell,i}_t)  \biggr].
\end{equation*}
Similarly, the performance of the auxiliary system~$i$, $i \in N$, is given by
$\breve q_0 \breve J^i(\pi; \breve \theta)$, where
\begin{equation*}
  \breve J^i(\pi; \breve \theta) = \limsup_{T \to \infty} \frac {1}{T}
  \EXP^\pi\biggl[ \sum_{t=1}^{T} c(\breve x^i_t, \breve u^i_t)  \biggr].
\end{equation*}
Proposition~\ref{prop:spectral} implies that the overall performance of
policy~$\pi$ can be decomposed as
\begin{equation}\label{eq:cost-split}
  J(\pi; \theta) = \sum_{i \in N} q_0 \breve J^i(\pi; \breve \theta) 
  +  \sum_{i \in N}\sum_{\ell=1}^{L}q^\ell J^{\ell,i}(\pi;\theta^\ell).
\end{equation}

The key intuition behind the result of~\cite{gao2019networked} is as
follows. By the certainty equivalence principle for LQ systems, we know that
(when the system dynamics are known) the optimal control policy of a
stochastic LQ system is the same as the optimal control policy of the
corresponding deterministic LQ system where the noises $\{w^i_t\}_{t \ge 1}$
are assumed to be zero. Note that when noises $\{w^i_t\}_{t \ge 1}$ are zero,
then the noises $\{w^{\ell,i}_t\}_{t \ge 1}$ and $\{\breve w^i_t\}_{t
\ge 1}$ of the eigen- and auxiliary-systems are also zero. This, in turn, implies that the
dynamics of all the eigen- and auxiliary systems are decoupled. These
decoupled dynamics along with the cost decoupling in~\eqref{eq:cost-split}
imply that we can choose the controls
$\{u^{\ell,i}_t\}_{t \ge 1}$ for the eigensystem $(\ell,i)$, $\ell \in
\{1,\dots, L\}$ and $i \in N$,  to
minimize\footnote{\label{fnt:scaling}The cost of the eigensystem~$(\ell,i)$ is
  $q^\ell J^{\ell,i}(\pi; \theta^\ell)$. From (A3), we know that $q^\ell$ is
  positive. Therefore, minimizing $q^\ell J^{\ell,i}(\pi; \theta^\ell)$ is the
same as minimizing $J^{\ell,i}(\pi; \theta^\ell)$.}
$J^{\ell,i}(\pi;\theta^\ell)$ and choose the controls $\{\breve u^i_t\}_{t \ge
1}$ for the auxiliary system~$i$, $i \in N$, to minimize\footnote{The same
remark as footnote~\ref{fnt:scaling} applies here.} $\breve J^i(\pi; \breve
\theta)$. These optimization problems are standard optimal control problems.
Therefore, similar to~\cite[Thoerem~3]{gao2019networked}, we obtain the
following result.
\begin{theorem}\label{thm:planning}
  Let $\brMAT S$ and $\{\MAT S^\ell\}_{\ell=1}^L$ be the solution of the
  following discrete time algebraic Riccati equations (DARE): 
  \begin{subequations}\label{eq:DARE}
    \begin{align}
      \brMAT S(\breve \theta) &= \DARE(\MAT A, \MAT B, \MAT Q, \tfrac{r_0}{q_0} \MAT R),
      \intertext{and for $\ell \in \{1,\dots,L\}$,}
      \MAT S^\ell(\theta^\ell) &= \DARE(\MAT A^\ell, \MAT B^\ell, \MAT Q,
      \tfrac{r^\ell}{q^\ell} \MAT R).
    \end{align}
  \end{subequations}
  Define the gains:
  \begin{subequations}\label{eq:gains}
    \begin{align}
      \brMAT G(\breve \theta) &=
      -\bigl( (\MAT B)^\TRANS \brMAT S(\breve \theta)\MAT B +  \tfrac{r_0}{q_0} \MAT R \bigr)^{-1}
      (\MAT B)^\TRANS \brMAT S(\breve \theta) \MAT A, \label{eq:gains1}
      \intertext{and for $\ell \in \{1,\dots,L\}$,}
      \MAT G^\ell(\theta^\ell) &=
      -\bigl( (\MAT B^{\ell})^\TRANS \MAT S^\ell(\theta^\ell) \MAT B^\ell +
      \tfrac{r^\ell}{q^\ell} \MAT R \bigr)^{-1}
      (\MAT B^{\ell})^\TRANS \MAT S^\ell(\theta^\ell) \MAT A^\ell. \label{eq:gains2}
    \end{align}
  \end{subequations}
  Then, under assumptions \textup{(A1)--(A3)}, the policy
  \begin{equation}\label{eq:optimal}
    u^i_t = \brMAT G(\breve \theta) \breve x^i_t + \sum_{\ell=1}^{L} \MAT G^\ell(\theta^\ell)  x^{\ell,i}_t
  \end{equation}
  minimizes the long-term average cost in~\eqref{eq:cost_performance} over all admissible policies.
  Furthermore, the optimal performance is given by 
  \begin{equation} \label{eq:performance}
    J(\theta) = \sum_{i \in N} q_0 \breve J^i(\breve \theta) 
    +  \sum_{i \in N}\sum_{\ell=1}^{L}q^\ell J^{\ell,i}(\theta^\ell),
  \end{equation}
  where 
  \begin{equation}\label{eq:breve-J}
    \breve J^i(\breve \theta) = (\breve v^i)^2 \TR(W \brMAT S)
  \end{equation}
 and for $\ell \in \{1,
  \dots, L\}$,
  \begin{equation}\label{eq:J-ell}
    J^{\ell,i}(\theta^\ell) = (v^{\ell,i})^2 \TR(W \MAT S^\ell).
  \end{equation}
\end{theorem}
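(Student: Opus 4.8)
The plan is to exploit the decoupling established in Proposition~\ref{prop:spectral}. Since the eigenstate dynamics~\eqref{eq:field-dynamics} depend only on the corresponding eigencontrol and eigennoise, the auxiliary dynamics~\eqref{eq:aux-dynamics} depend only on the auxiliary control and auxiliary noise, and the per-step cost splits additively across these components via~\eqref{eq:cost-split}, the global long-term-average problem separates into a collection of independent average-cost LQG problems: one for each eigensystem $(\ell,i)$ with dynamics $(\MAT A^\ell,\MAT B^\ell)$ and cost weights $(\MAT Q,\tfrac{r^\ell}{q^\ell}\MAT R)$, and one for each auxiliary system $i$ with dynamics $(\MAT A,\MAT B)$ and weights $(\MAT Q,\tfrac{r_0}{q_0}\MAT R)$. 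I would first solve each isolated subsystem by standard LQR/LQG theory and then verify that the resulting feedback laws can be implemented simultaneously by a single global policy.

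For the isolated subsystems, I would invoke the certainty-equivalence principle: the optimal policy of each stochastic average-cost LQG problem coincides with that of its noise-free counterpart, whose optimal control is the linear state feedback obtained from the discrete-time algebraic Riccati equation. Assumption \textup{(A1)} guarantees stabilizability of $(\MAT A^\ell,\MAT B^\ell)$ and $(\MAT A,\MAT B)$, while \textup{(A2)--(A3)} make the state weight $\MAT Q\succ0$ (hence detectability) and the control weights $\tfrac{r^\ell}{q^\ell}\MAT R,\tfrac{r_0}{q_0}\MAT R\succ0$; consequently each DARE in~\eqref{eq:DARE} has a unique positive-semidefinite solution, the associated gain $\MAT G^\ell(\theta^\ell)$ in~\eqref{eq:gains2} (resp.\ $\brMAT G(\breve\theta)$ in~\eqref{eq:gains1}) is stabilizing and optimal, and the optimal average cost of a subsystem driven by noise of covariance $\Sigma$ equals $\TR(\Sigma\MAT S)$. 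Substituting the noise covariances $\VAR(w^{\ell,i}_t)=(v^{\ell,i})^2\MAT W$ and $\VAR(\breve w^i_t)=(\breve v^i)^2\MAT W$ from~\eqref{eq:var} then yields~\eqref{eq:J-ell} and~\eqref{eq:breve-J}, and summing over subsystems with the weights $q^\ell,q_0$ as in~\eqref{eq:cost-split} produces the candidate value in~\eqref{eq:performance}.

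The step I expect to be the real crux is showing that these per-subsystem feedback laws are jointly realizable, i.e.\ that the single global policy~\eqref{eq:optimal} induces exactly $u^{\ell,i}_t=\MAT G^\ell(\theta^\ell)x^{\ell,i}_t$ and $\breve u^i_t=\brMAT G(\breve\theta)\breve x^i_t$ in every subsystem. This is not automatic, because the eigencontrols are tied together through $u^\ell_t=u_t v^\ell(v^\ell)^\TRANS$ and cannot be chosen freely. I would verify it by direct projection algebra: writing~\eqref{eq:optimal} in matrix form $u_t=\brMAT G\breve x_t+\sum_{\ell}\MAT G^\ell x^\ell_t$ and applying the projector $(\cdot)\,v^\ell(v^\ell)^\TRANS$, the orthonormality of the eigenvectors gives $x^{\ell'}_t v^\ell=\delta_{\ell'\ell}\,x_t v^\ell$ and $\breve x_t v^\ell=0$ (from~\eqref{eq:xell}--\eqref{eq:xbreve}), so only the $\ell'=\ell$ term survives and $u^\ell_t=\MAT G^\ell x^\ell_t$; subtracting these from $u_t$ leaves $\breve u_t=\brMAT G\breve x_t$. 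The decisive point is that the gain $\MAT G^\ell$ is common to all nodes $i$, which is precisely what makes the reconstructed control consistent with the spectral decomposition. Since each summand of~\eqref{eq:cost-split} is the average cost of a subsystem with decoupled dynamics and is therefore bounded below by its isolated optimum regardless of the policy, the realizable global policy~\eqref{eq:optimal} attains this sum of lower bounds, establishing both its optimality and the value~\eqref{eq:performance}.
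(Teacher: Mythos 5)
Your proposal is correct, and it reaches the theorem by a more self-contained route than the paper does. The paper's own argument (following \cite{gao2019networked}) runs through the certainty-equivalence principle: the stochastic optimum coincides with the optimum of the noise-free system, and once the noises are set to zero the eigen- and auxiliary dynamics are manifestly decoupled, so the controls can be chosen separately and the result follows from standard deterministic LQR theory together with the cited Theorem~3 of \cite{gao2019networked}. You instead stay entirely in the stochastic setting and give a verification-style argument: each summand of \eqref{eq:cost-split} is lower-bounded by the isolated LQG optimum of the corresponding subsystem, and the policy \eqref{eq:optimal} attains all these lower bounds simultaneously. The genuinely valuable part of your write-up is the realizability step, which the paper glosses over with the phrase ``we can choose the controls \dots to minimize'': since $u^\ell_t = u_t v^\ell (v^\ell)^\TRANS$ and $\breve u_t = u_t - \sum_\ell u^\ell_t$, the subsystem controls are projections of one global control and cannot a priori be assigned independently; your projection algebra (using $x^{\ell'}_t v^\ell = \delta_{\ell'\ell} x_t v^\ell$ and $\breve x_t v^\ell = 0$) shows the constraint is not binding for the candidate policy, precisely because the gain $\MAT G^\ell$ is the same at every node. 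One small point to tighten: your claim that $J^{\ell,i}(\pi;\theta^\ell)$ is bounded below by the isolated optimum ``regardless of the policy'' needs a word of justification, since under an arbitrary admissible policy $u^{\ell,i}_t$ may depend on the states of \emph{other} subsystems; the standard completion-of-squares (Bellman) argument covers this, because $w^{\ell,i}_t$ is independent of the entire history and hence of $(x^{\ell,i}_t, u^{\ell,i}_t)$, so side information does not improve the achievable average cost. With that one sentence added, your proof is rigorous, and arguably more explicit about the crux than the paper's certainty-equivalence shortcut, whose benefit is brevity and direct reuse of the known planning result.
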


\section{Learning for network-coupled subsystems}\label{sec:learning}
For the ease of notation, we define $
z^{\ell,i}_t = \VEC( x^{\ell,i}_t, u^{\ell,i}_t)$ and $\breve z^i_t = \VEC(\breve x^i_t,
\breve u^i_t)$. Then, we can write the
dynamics~\eqref{eq:field-dynamics}, \eqref{eq:aux-dynamics} of the eigen and
the auxiliary systems~as
\begin{subequations}\label{eq:bar-dynamics}
\begin{align}
  {x}^{\ell,i}_{t+1} &= (\theta^\ell)^\TRANS {z}^{\ell,i}_t +  w^{\ell,i}_t,
  &&\forall i \in N,
  \forall \ell \in \{1, \dots, L\},
  \label{eq:mf-bar}
  \\
  \breve{x}^i_{t+1} &= (\breve{\theta})^\TRANS \breve{z}^i_{t} +
  \breve{w}^i_{t},
  && \forall i \in N.
  \label{eq:rel-bar}
\end{align}
\end{subequations}

\subsection{Simplifying assumptions}

We impose the following assumptions to simplify the description of the algorithm and the regret analysis. 
\begin{description}
  \item[(A4)]
    The noise covariance $W$ is a scaled identity matrix given by $\sigma_w^2 I$.
  \item[(A5)] 
    For each $i \in N$, $\breve v^i \neq 0$.
\end{description}

Assumption (A4) is commonly made in most of the literature on regret analysis
of LQG systems. An implication of (A4) is that $\VAR(\breve w^i_t) = (\breve
\sigma^i)^2 I$ and $\VAR(w^{\ell,i}_t) = (\sigma^{\ell,i})^2 I$, where
\begin{equation}\label{eq:sigma}
  (\breve \sigma^i)^2 = (\breve v^i)^2 \sigma_w^2
  \quad\text{and}\quad
  (\sigma^{\ell,i})^2 = (v^{\ell,i})^2 \sigma_w^2.
\end{equation}

Assumption (A5) is made to rule out the case where the dynamics of some of the auxiliary systems are deterministic. 

\subsection{Prior and posterior beliefs:}

We assume that the unknown parameters $\breve \theta$ and $\{ \theta^\ell
\}_{\ell = 1}^L$ lie in compact subsets $\breve \Theta$ and $\{
\Theta^\ell\}_{\ell=1}^L$ of $\reals^{(d_x + d_u)
\times d_x}$. Let $\breve \theta^k$ denote the $k$-th column of
$\breve \theta$. Thus $\breve \theta = [\breve \theta^1, \dots, \breve
\theta^{d_x}]$. Similarly, let $ \theta^{\ell,k}$ denote the $k$-th
column of $ \theta^\ell$. Thus, $ \theta^\ell = [\theta^{\ell,1}, \dots,
\theta^{\ell,d_x}]$.
We use  $p\bigr|_{\Theta}$ to
denote the restriction of probability distribution $p$ on the set $\Theta$.

We assume that $\breve \theta$ and $\{ \theta^{\ell} \}_{\ell=1}^L$ are random
variables that are independent of the initial states and the noise processes.
Furthermore, we assume that the priors $\breve p_1$ and $\{p^\ell_1\}_{\ell=1}^L$ on
$\breve \theta$ and $\{ \theta^{\ell} \}_{\ell=1}^L$, respectively, satisfy
the following:
\begin{description}
  \item[(A6)] $\breve p_1$ is given as:
   \[
      \breve p_1(\breve \theta) =  \biggl[\prod_{k=1}^{d_x} \breve
    \xi^{k}_1(\breve \theta^k)\biggr]\biggr|_{\breve \Theta}  
    \]
      where for $k \in \{1, \dots, d_x\}$, 
      $\breve \xi_1^{k} = \mathcal{N}(\breve \mu_1^k, \breve \Sigma_1)$
   with mean $\breve \mu_1^k \in
   \reals^{d_x + d_u}$  and positive-definite covariance
    $\breve \Sigma_1 \in \reals^{(d_x + d_u) \times (d_x + d_u)}$.
  \item[(A7)] For each $\ell \in \{1, \dots, L\}$, $ p^\ell_1$ is given as:
 \[
       p^\ell_1( \theta^\ell) =  \biggl[\prod_{k=1}^{d_x}  {\xi}^{\ell,k}_1( \theta^{\ell, k})\biggr]\biggr|_{ \Theta^\ell}   
       \]
     where for $k \in \{1, \dots, d_x\}$, 
     $ \xi_1^{\ell,k} =
     \mathcal{N}( \mu^{\ell, k}_1,  \Sigma^\ell_1)$ with
     mean $\mu^{\ell, k}_1 \in \reals^{(d_x + d_u)}$ and
     positive-definite covariance $\Sigma^\ell_1 \in \reals^{(d_x + d_u)
     \times (d_x + d_u)}$.
\end{description}

These assumptions are similar to the assumptions on the prior 
in the recent literature on Thompson sampling for LQ systems~\cite{ouyang2017control, ouyang2019posterior}.

Our learning algorithm (and TS-based algorithms in general) keeps track of a posterior distribution on the unknown parameters based on observed data. Motivated by the nature of the planning solution (see Theorem~\ref{thm:planning}), we maintain
separate posterior distributions on $\breve \theta$ and
$\{\theta^\ell\}_{\ell=1}^L$. For each $\ell$, we select an subsystem $\iell$ such that the $\iell$-th component of the eigen-vector $v^{\ell}$ is non-zero (i.e. $v^{\Iell} \neq 0$) . At time~$t$, we maintain a posterior distribution  $p^\ell_t$ on $ \theta^\ell$
based on the corresponding eigen state and action history of the $\iell$-th subsystem. In other words, for any Borel
subset $B$ of $\reals^{(d_x + d_u) \times d_x}$,  $p^\ell_t(B)$  gives the following conditional probability
\begin{equation}
    p^\ell_t(B) = \PR( \theta^\ell \in B \mid  x^{\Iell}_{1:t},  u^{\Iell}_{1:t-1}). \label{eq:bar_posterior}
\end{equation}

We maintain a separate posterior distribution
$\breve p_t$ on $\breve \theta$ as follows. At each time $t > 1$, we select
an subsystem  $ j_{t-1} = \arg \max_{i \in N} \breve{z}^{i^\TRANS}_{t-1} \breve \Sigma_{t-1} \breve z^i_{t-1}/(\breve \sigma^i_t)^2$, where $\breve \Sigma_{t-1}$ is a
covariance matrix defined recursively  in Lemma \ref{lem:posterior} below. Then, for any Borel subset $B$ of
$\reals^{(d_x + d_u) \times d_x}$, 
\begin{equation}
  \breve p_t(B) = \PR(\breve \theta \in B \mid 
    \{ \breve x^{j_s}_s, \breve u^{j_s}_s, \breve x^{j_s}_{s+1} \}_{1 \le
      s < t }\} ), \label{eq:breve_posterior}
\end{equation}
See  \cite{gagrani2020thompson} for a discussion on the rule to select $j_{t-1}$.

\begin{lemma}\label{lem:posterior}
  The posterior distributions $p^\ell_t$, $\ell \in \{1,2,\dots,L\}$, and  $\breve p_t$ are given as follows:
  \begin{enumerate}
  \item  $p^\ell_1$ is given by Assumption (A7) and for any $t \geq 1$, 
  \[
  p^\ell_{t+1}(\theta^\ell) = \biggl[ \prod_{k=1}^{ d_x}  \xi_{t+1}^{\ell,k}(
     \theta^{\ell, k}) \biggr]\biggr|_{ \Theta^\ell},
     \]
     where for $k \in \{1, \dots, d_x\}$,  
     $ \xi_{t+1}^{\ell,k} =
      \mathcal{N}( \mu^{\ell, k}_{t+1},  \Sigma^\ell_{t+1})$, 
      and  
      
      \begin{subequations}\label{eq:p-bar}
      \begin{align}
      \mu^\ell_{t+1} &= \mu^\ell_t + 
        \frac{ 
          \Sigma^\ell_t  z^{\Iell}_t
          \bigl(  x^{\Iell}_{t+1} - (\mu^\ell_t)^\TRANS
        z^{\Iell}_t \bigr)^\TRANS
        }
        { (\sigma^{\Iell})^2 +   (z^{\Iell}_t)^\TRANS  \Sigma^\ell_t
        z^{\Iell}_t }
        , \label{eq:mu_bar_update} 
        \\
         (\Sigma^\ell_{t+1})^{-1} &=  (\Sigma^\ell_{t})^{-1}
       + \frac{1}{(\sigma^{\Iell})^2}  z^{\Iell}_t (z^{\Iell}_t)^\TRANS,
        \label{eq:sigma_bar_update}
      \end{align}
      \end{subequations}
      where, for each $t$, $\mu^\ell_{t}$ denotes the matrix $[\mu^{\ell,1}_{t}, \ldots, \mu^{\ell,d_x}_{t}]$.
      \item $\breve p_1$ is given by Assumption (A6) and for any $t \geq 1$,
       \[ 
       \breve p_{t+1}(\breve \theta) = \biggl[ \prod_{k=1}^{d_x} \breve
    \xi^{k}_{t+1}(\breve \theta^k) \biggr]\biggr|_{\breve \Theta},
    \]
      where for $k \in \{1, \dots, d_x\}$,  
      $\breve \xi^{k}_{t+1} =
    \mathcal{N}(\breve \mu_{t+1}^k, \breve \Sigma_{t+1})$,
     and 
      \begin{subequations}\label{eq:p-breve}
      \begin{align}
      \breve \mu_{t+1} &= \breve \mu_{t} + 
        \frac{ 
          \breve \Sigma_t \breve z^{j_t}_t 
          \bigl( \breve x^{j_t}_{t+1} - (\breve \mu_t)^\TRANS
        \breve z^{j_t}_t \bigr)^\TRANS }
        { (\breve{\sigma}^{j_t})^2 +  (\breve z^{j_t}_t)^\TRANS \breve \Sigma_t 
        \breve z^{j_t}_t },
        \label{eq:mu_breve_update}
        \\
        (\breve \Sigma_{t+1})^{-1} &= (\breve \Sigma_t)^{-1} 
        + \frac{1}{(\breve{\sigma}^{j_t})^2} \breve z^{j_t}_t (\breve z^{j_t}_t)^\TRANS.
        \label{eq:sigma_breve_update}
      \end{align}
      \end{subequations}
       where, for each $t$, $\breve\mu_{t}$ denotes the matrix $[\breve\mu^{1}_{t}, \ldots, \breve\mu^{d_x}_{t}]$.
  \end{enumerate}
\end{lemma}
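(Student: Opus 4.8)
The plan is to prove both claims by induction on $t$, exploiting the conjugacy of a Gaussian prior with the Gaussian likelihood of a linear observation. The two cases are structurally identical, so I would carry out the eigen-system case in full and then indicate the one bookkeeping change needed for the auxiliary case.

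First I would observe that Assumption (A4) makes the noise covariance $(\sigma^{\Iell})^2 I$ a scaled identity, so that given $\theta^\ell$ and the regressor $z^{\Iell}_t$, the $d_x$ coordinates of the new observation $x^{\Iell}_{t+1} = (\theta^\ell)^\TRANS z^{\Iell}_t + w^{\Iell}_t$ are conditionally independent, with the $k$-th coordinate depending on $\theta^\ell$ only through its $k$-th column $\theta^{\ell,k}$. This matches the product form of the prior in (A7) and lets me update each column separately. Moreover, the posterior covariance of each column will turn out to depend only on the regressors and the noise variance, not on the observed values, so the common covariance $\Sigma^\ell_{t+1}$ is shared across $k$, exactly as asserted.

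Next, for a fixed column $k$, I would assume inductively that the pre-restriction belief on $\theta^{\ell,k}$ is $\mathcal{N}(\mu^{\ell,k}_t, \Sigma^\ell_t)$ and write Bayes' rule, posterior $\propto$ prior $\times$ likelihood, where the likelihood of the newly revealed data is the $k$-th coordinate of the Gaussian transition density $\exp\bigl(-\tfrac{1}{2(\sigma^{\Iell})^2}\lVert x^{\Iell}_{t+1} - (\theta^\ell)^\TRANS z^{\Iell}_t\rVert^2\bigr)$. Here it is important that, conditioned on the history $x^{\Iell}_{1:t}, u^{\Iell}_{1:t-1}$ together with the newly added $u^{\Iell}_t$, the regressor $z^{\Iell}_t = \VEC(x^{\Iell}_t, u^{\Iell}_t)$ is known and the noise $w^{\Iell}_t$ is independent of $\theta^\ell$ and of the past; thus the control being chosen adaptively (and possibly depending on a sampled parameter) does not alter the likelihood. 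Collecting the quadratic and linear terms in $\theta^{\ell,k}$ and completing the square yields a Gaussian whose inverse covariance is $(\Sigma^\ell_t)^{-1} + (\sigma^{\Iell})^{-2} z^{\Iell}_t (z^{\Iell}_t)^\TRANS$, giving \eqref{eq:sigma_bar_update}, and whose mean solves the associated normal equation. Reconciling that mean with the Kalman-gain form \eqref{eq:mu_bar_update} is a direct application of the Sherman--Morrison identity to the rank-one covariance update. Finally, since restricting a density to the fixed compact set $\Theta^\ell$ only rescales it, truncation commutes with the Bayesian update, so the restriction is preserved at every step.

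For the auxiliary system the argument is identical after two changes: the data at time $t$ comes from the adaptively selected subsystem $j_t$, so the regressor is $\breve z^{j_t}_t$ and the relevant noise variance is $(\breve\sigma^{j_t})^2$. Because $j_t$ is a deterministic function of the conditioning history (it is the $\arg\max$ defining it), it is known once we condition and can be treated as a fixed index in the update, yielding \eqref{eq:p-breve} by the same completion-of-the-square computation. I expect the only genuinely delicate point to be this adaptivity argument: one must verify that conditioning on the history, including the selected index and the realized control, leaves the transition noise independent of both the unknown parameter and the regressor, so that the standard conjugate Gaussian update applies verbatim. Once that conditional-independence structure is in place, the remainder is the routine linear-Gaussian algebra sketched above.
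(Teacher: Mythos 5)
Your overall route is the same as the paper's: the paper's entire proof consists of noting that the dynamics in \eqref{eq:bar-dynamics} are linear with Gaussian noise and then citing standard results on Gaussian linear regression \cite{sternby1977consistency}; your proposal carries that standard argument out explicitly (coordinate-wise conjugacy enabled by (A4) and the product priors (A6)--(A7), the rank-one precision update, Sherman--Morrison to put the posterior mean in the Kalman-gain form \eqref{eq:mu_bar_update}, and the observation that restriction to the fixed sets $\Theta^\ell$, $\breve\Theta$ commutes with the Bayes update). For the eigen systems this is complete and correct: the conditioning set in \eqref{eq:bar_posterior} is the full trajectory $(x^{\Iell}_{1:t}, u^{\Iell}_{1:t-1})$ of the single fixed subsystem $\iell$, so every regressor $z^{\Iell}_s$ is measurable with respect to the conditioning information, the adaptively chosen control is a function of that information plus randomization independent of $\theta^\ell$, and the transition noise is fresh; hence the likelihood factorizes exactly as you say.

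The gap is in the auxiliary case, at precisely the point you flag as the crux. Your justification rests on the claim that $j_t$ ``is a deterministic function of the conditioning history (it is the $\arg\max$ defining it).'' This is false: $j_t = \arg\max_{i \in N} (\breve z^i_t)^\TRANS \breve\Sigma_t \breve z^i_t / (\breve\sigma^i)^2$ depends on the auxiliary states and controls of \emph{all} subsystems at time $t$, whereas the conditioning set in \eqref{eq:breve_posterior} contains only the triples of the previously selected subsystems ($\breve\Sigma_t$ is computable from that set, but the regressors $\breve z^i_t$ for unselected $i$ are not). The index $j_t$ is known to the update only because the observed data is labeled by it, not because it is measurable with respect to that data. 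What actually has to be verified is an exogeneity (no-selection-bias) property: given the past selected data and $\breve\theta$, the identity of the newly selected subsystem and its current regressor $\breve z^{j_t}_t$ must contribute no $\breve\theta$-dependent factor to the likelihood beyond the Gaussian transition term --- a nontrivial point, because the states of previously unselected subsystems evolve under $\breve\theta$ and the selection maximizes over them. The paper's own proof does not confront this either (it is hidden in the citation, with the choice of selection rule deferred to \cite{gagrani2020thompson}), but a proof that explicitly identifies adaptivity as the delicate step and then disposes of it via an incorrect measurability claim has a genuine hole there; the rest of your argument (the routine linear-Gaussian algebra) is fine.
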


\begin{proof}
  Note that the dynamics of ${x}^{\Iell}_t$ and $\breve{x}^i_t$ in
  \eqref{eq:bar-dynamics} are linear and the noises ${w}^{\Iell}_t$ and $\breve w^i_t$ are Gaussian. Therefore, the result follows from
  standard results in Gaussian linear regression
  \cite{sternby1977consistency}.
\end{proof}

\subsection{The Thompson sampling algorithm:}

We propose a Thompson sampling {based} algorithm called
\texttt{Net-TSDE} which is inspired by the \texttt{TSDE} (Thompson sampling
with dynamic episodes) algorithm proposed in~\cite{ouyang2017control,
ouyang2019posterior} and the structure of the optimal planning solution
described in Sec.~\ref{sec:planning}. The Thompson sampling part of our
algorithm is modeled after the modification of \texttt{TSDE} presented
in~\cite{altproof}.

The \texttt{Net-TSDE} algorithm consists of a coordinator $\mathcal C$ and
$|L|+1$ \emph{actors}: an auxiliary actor $\breve{\mathcal{A}}$ and an  eigen actor ${\mathcal A^\ell}$ for each $\ell \in \{1,2,\dots,L\}$. These
actors are described below and the whole algorithm is presented in
Algorithm~\ref{alg:tsde_mf}.
\begin{itemize}
  \item At each time, the coordinator $\mathcal C$ observes the current global
    state $x_t$, computes the eigenstates $\{ x^\ell_t\}_{\ell =1}^L$ and
    the auxiliary states $\breve x_t$, and sends 
    the eigenstate $ x^\ell_t$ to  the eigen actor
    ${\mathcal A^\ell}$, $\ell \in \{1, \dots, L\}$, and sends the auxiliary state $\breve x_t$ to the
    auxiliary actor~$\breve{\mathcal A}$. The eigen actor
    ${\mathcal A^\ell}$, $\ell \in \{1, \dots, L\}$, computes the eigencontrol $ u^\ell_t$ and the
    auxiliary actor $\breve {\mathcal
    A}$ computes the auxiliary control $\breve u_t$
    (as per the details presented below) and both send their computed controls back to the
    coordinator~$\mathcal{C}$. The coordinator then computes and executes the
    control action $u^i_t = \sum_{\ell=1}^{L} u^{\ell,i}_t + \breve u^i_t$ for each subsystem~${i
    \in N}$ .

  \item The eigen actor ${\mathcal{A}^\ell}$, $\ell \in \{1, \dots, L\}$, maintains the posterior $p^\ell_t$ on $ \theta^\ell$ according to~\eqref{eq:p-bar}. The actor works in
    episodes of dynamic length. Let $t^\ell_k$ and $T^\ell_k$ denote the starting time
    and the length of episode~$k$, respectively. Each episode is of a minimum length $T^\ell_{\min} +1$, where $T^\ell_{\min}$ is chosen as described in~\cite{altproof}. Episode~$k$ ends if the
    determinant of covariance $ \Sigma^\ell_t$ falls below  half of its value at
   time \blue{$t^\ell_k$  (i.e., $\det( \Sigma^\ell_t) < \tfrac12 \det \Sigma_{t^\ell_k }$)} or if the length of the episode is one more than the
    length of the previous episode (i.e., $t - t^\ell_k >  T^\ell_{k-1}$).
    Thus,
    \begin{align*}\label{eq:stopping1}
    t^\ell_{k+1} = \min \left\{ t > t^\ell_k + T^\ell_{\min} 
      \,\middle|\, \begin{lgathered}
        t - t^\ell_k > T^\ell_{k-1} \text{ or } \\
      \det \Sigma^\ell_t < \blue{\tfrac12 \det \Sigma_{t^\ell_k} }
  \end{lgathered} \right\}
\end{align*}
    At the beginning of episode~$k$, the eigen actor ${\mathcal{A}^\ell}$
    samples a parameter $ \theta^\ell_k$ according to the posterior distribution $
    p^\ell_{t^\ell_k}$. During episode~$k$, the eigen actor ${\mathcal{A}^\ell}$
    generates the eigen controls using   the sampled  parameter $ \theta^\ell_k$, 
    i.e., $ u^\ell_t = \MAT G^\ell( \theta^\ell_k)  x^\ell_t$.
  \item The auxiliary actor $\breve{\mathcal A}$ is similar to the
    eigen actor. Actor $\breve{\mathcal A}$ maintains the posterior
    $\breve p_t$ on $\breve \theta$ according to~\eqref{eq:p-breve}. The
    actor works in episodes of dynamic length. The episodes of the auxiliary
    actor $\breve{\mathcal{A}}$ and the eigen actors ${\mathcal{A}^\ell}$, $\ell \in \{1,2,\dots,L\}$,
    are separate from each other.\footnote{The
      episode count~$k$ is used as a local variable for each
    actor.} Let $\breve t_k$ and $\breve T_k$ denote
    the starting time and the length of episode~$k$, respectively. Each episode is of a minimum length $\breve T_{\min}+1$, where $\breve T_{\min}$ is chosen as described in~\cite{altproof}. The termination
    condition for each episode is similar to that of the eigen
    actor~${\mathcal{A}^\ell}$. In particular,
    \begin{align*}\label{eq:stopping}
    \breve t_{k+1} = \min \left\{ t > \breve t_k + \breve T_{\min} 
      \,\middle|\, \begin{lgathered}
        t - \breve t_k >\breve T_{k-1} \text{ or } \\
      \det \breve \Sigma_t <\blue{\tfrac12 \det \breve \Sigma_{\breve t_k}}  
  \end{lgathered} \right\}
\end{align*}
    At the beginning of episode~$k$, the auxillary actor $\breve{\mathcal{A}}$
    samples a parameter $\breve \theta_k$ from the posterior distribution $\breve
    p_{\breve t_k}$. During episode~$k$, the auxiliary actor
    $\breve{\mathcal{A}}$ generates the auxiliary controls using the   the sampled  parameter
    $\breve \theta_k$, i.e., $\breve u_t 
     = \brMAT G(\breve \theta_{k})\breve x_t$.
\end{itemize}

\begin{algorithm}[!t]
\caption{\texttt{Net-TSDE}}
\label{alg:tsde_mf}
\begin{algorithmic}[1]
  \State \textbf{initialize eigen actor:} $ \Theta^\ell$, $( \mu^\ell_1,
  \Sigma^\ell_1)$, $ t^\ell_0 = -T_{\min}$, $ T^\ell_{-1} = T_{\min}$, $k = 0$, $\theta^\ell_k = 0$
  \State \textbf{initialize auxiliary actor:} $\breve \Theta$, $(\breve
  \mu_1, \breve \Sigma_1)$, $\breve t_0 = -T_{\min}$, $\breve T_{-1} = T_{\min}$, $k = 0$, $\breve \theta_k = 0$.
\For{$t = 1, 2, \dots $}
  \State observe $x_t$
  \State compute $\{ x^\ell_t \}_{\ell=1}^L$ and $\breve x_t$ using~\eqref{eq:xell} and~\eqref{eq:xbreve}.
  \For{$\ell = 1, 2, \dots, L $}
  \State $u^\ell_t \gets \textsc{eigen-actor}(x^\ell_t)$
  \EndFor
      \State $\breve u_t \gets \textsc{auxiliary-actor}(\breve x_t)$
      \For{$i \in N$}
      \State Subsystem~$i$ applies control $u^i_t =  u^{\ell,i}_t + \breve
      u^i_t$
      \EndFor
 \EndFor
\end{algorithmic}
\medskip
\begin{algorithmic}[1]
\Function{eigen-actor}{$ x^\ell_t$}
  \State \textbf{global var} $t$
  \State Update $ p^\ell_t$ according~\eqref{eq:p-bar}
  \State \textbf{if} $(t - t^\ell_k > T_{\min})$ and 
  \State \quad ($(t - t^\ell_k > T^\ell_{k-1})$ or 
    $(\det \Sigma^\ell_t < \blue{\tfrac12 \det \Sigma_{t^\ell_k}})$)
  \State  \textbf{then}
  \State \qquad $T^\ell_k \gets t - t^\ell_k$,
          $k \gets k + 1$,
          $ t^\ell_k \gets t$
  \State \qquad sample $ \theta^\ell_k \sim  p^\ell_t$
\State \textbf{return} $\MAT G^\ell(\theta^\ell_k)  x^\ell_t$
\EndFunction
\end{algorithmic}
\medskip
\begin{algorithmic}[1]
  \Function{auxiliary-actor}{$\breve x_t$}
  \State \textbf{global var} $t$
  \State Update $\breve p_t$ according~\eqref{eq:p-breve}
  \State \textbf{if} $(t - \breve t_k > T_{\min})$ and 
  \State \quad ($(t - \breve t_k > \breve T_{k-1})$ or 
    $(\det \breve \Sigma_t < \blue{\tfrac12 \det \breve \Sigma_{t^\ell_k}})$)
  \State  \textbf{then}
    \State \qquad $\breve T_k \gets t - \breve t_k$,
           $k \gets k + 1$,
           $\breve t_k \gets t$
    \State \qquad sample $\breve \theta_k \sim \breve p_t$
\State \textbf{return} $\brMAT G(\breve \theta_k)  \breve x_t$
\EndFunction

\end{algorithmic}
\end{algorithm}

Note that the algorithm does not depend on the horizon~$T$.

\subsection{Regret bounds:}

We impose the following assumption to ensure that
the closed loop dynamics of the eigenstates and the auxiliary states of
each subsystem are stable. 

\begin{description}
\item[(A8)] There exists a positive number $\delta \in (0, 1)$ such that 
  \begin{itemize}
  \item for any  $\ell \in \{1,2,\dots,L\}$ and $ \theta^\ell,  \phi^\ell \in  \Theta^\ell$ where $
    (\theta^{\ell})^\TRANS = [\MAT A^\ell_{ \theta^\ell}, \MAT B^\ell_{ \theta^\ell} ]$, we
        have
        \[
      \rho(A^\ell_{\theta^\ell} + B^\ell_{\theta^\ell} G^\ell(\phi^\ell)) \le \delta.
    \]
  \item for any 
        $\breve \theta,
        \breve \phi \in \breve \Theta$, 
        where $(\breve \theta)^\TRANS = [\MAT A_{\breve \theta}, \MAT B_{\breve
        \theta}]$, we have
    \[
      \rho(A_{\breve \theta} + B_{\breve \theta} \breve G(\breve \phi)) \le \delta.
    \]
  \end{itemize}

\end{description}

This assumption is similar to an assumption made in \cite{altproof} for TS
for LQG systems. According to~\cite[Lemma~1]{faradonbeh2019finite} (also
see~\cite[Theorem~11]{simchowitz2020naive}), (A8) is satisfied if
\begin{align*}
  \Theta^\ell &= \{ \theta^\ell \in \reals^{(d_x + d_u)\times d_x} :
  \| \theta^\ell - \theta^\ell_\circ \| \le \varepsilon^\ell \},
  \\
  \breve \Theta &= \{ \breve \theta \in \reals^{(d_x + d_u)\times d_x} :
  \| \breve \theta - \breve \theta_\circ \| \le \breve \varepsilon \},
\end{align*}
where $\theta^\ell$ and $\breve \theta$ are stabilizable and
$\varepsilon^\ell$ and $\breve \varepsilon$ are sufficiently small.
In other words, the assumption holds when the true system is in a small
neighborhood of a known nominal system. Such a the small neighborhood can be
learned with high probability by running appropriate stabilizing procedures
for finite time~\cite{faradonbeh2019finite, simchowitz2020naive}.
 
The following result provides an 
upper bound on the regret of the proposed algorithm.
\begin{theorem}\label{thm:main}
  Under \textup{(A1)--(A8)}, the regret of \textup{\texttt{Net-TSDE}} is upper
  bounded as follows:
  \begin{equation*}
    R(T;\textup{\texttt{Net-TSDE}} ) \leq \tilde{\mathcal O} \bigl(
      \alpha^{\mathcal{G}}
       \sigma_w^2
     d_x^{0.5}(d_x + d_u) \sqrt{T} \bigr),
  \end{equation*}
  where
  \(
    \alpha^{\mathcal{G}} = \sum_{\ell=1}^{L} q^\ell
    + q_0(n-L).
  \)
\end{theorem}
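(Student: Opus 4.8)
The plan is to exploit the additive cost decomposition of Proposition~\ref{prop:spectral} together with~\eqref{eq:cost-split} to split the total regret into separate contributions from the $L$ families of eigensystems and the $n$ auxiliary systems, and then to reduce each contribution to (a rescaling of) the standard Thompson-sampling regret analysis for a single LQG system developed in~\cite{ouyang2017control, altproof}. Concretely, I would first write
\[
  R(T;\text{\texttt{Net-TSDE}}) = \sum_{\ell=1}^{L} q^\ell \sum_{i \in N} R^{\ell,i}(T) + q_0 \sum_{i \in N} \breve R^i(T),
\]
where $R^{\ell,i}(T)$ and $\breve R^i(T)$ are the pathwise regrets of eigensystem $(\ell,i)$ and auxiliary system $i$ measured against their optimal average costs $J^{\ell,i}(\theta^\ell)$ and $\breve J^i(\breve\theta)$ from Theorem~\ref{thm:planning}.

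The eigensystem contribution is the easy part. The key observation is that, from~\eqref{eq:xell}, every eigenstate is a fixed scalar multiple of one common vector: writing $y^\ell_t = x_t v^\ell$ we have $x^{\ell,i}_t = v^{\ell,i} y^\ell_t$ and $u^{\ell,i}_t = v^{\ell,i} \nu^\ell_t$ with $\nu^\ell_t = u_t v^\ell$. Since $c^\ell$ is quadratic, $c^\ell(x^{\ell,i}_t, u^{\ell,i}_t) = (v^{\ell,i})^2 c^\ell(y^\ell_t, \nu^\ell_t)$, and because $\|v^\ell\|=1$ the sum over $i$ collapses to $\sum_{i\in N} c^\ell(x^{\ell,i}_t, u^{\ell,i}_t) = c^\ell(y^\ell_t, \nu^\ell_t)$. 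Moreover $y^\ell_t$ obeys a standard LQG recursion with parameter $\theta^\ell$ and noise covariance $\sigma_w^2 I$, and one checks directly that the rescaling factor $v^{\Iell}$ in the posterior update~\eqref{eq:p-bar} cancels against $(\sigma^{\Iell})^2 = (v^{\Iell})^2\sigma_w^2$, so the eigen actor is \emph{exactly} running \texttt{TSDE} on this reduced system. Hence $\sum_{i\in N} R^{\ell,i}(T)$ equals the regret of a single \texttt{TSDE}-controlled LQG system, which~\cite{altproof} bounds by $\tildeO(\sigma_w^2 d_x^{0.5}(d_x+d_u)\sqrt{T})$.

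The auxiliary contribution is the main obstacle, because the $n$ auxiliary systems are genuinely distinct $d_x$-dimensional systems that do \emph{not} collapse to a single vector, yet they share one posterior $\breve p_t$, one sampled $\breve\theta_k$, and one gain $\brMAT G(\breve\theta_k)$, while the posterior is updated from only the single most informative subsystem $j_t$. The target factor is $(n-L)$, and this is precisely $\sum_{i\in N}(\breve v^i)^2 = n - \sum_{\ell=1}^L\|v^\ell\|^2 = n-L$. I would run the \texttt{TSDE} regret decomposition simultaneously across all $i$: telescoping the common quadratic value functions $(\breve x^i_t)^\TRANS \brMAT S(\breve\theta_k)\breve x^i_t$, using the Thompson-sampling identity that $\breve\theta_k$ and $\breve\theta$ share the episode-start posterior, and bounding the episode count by $\tildeO(\sqrt{T})$ via the determinant stopping rule. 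The crucial step is the selection rule $j_t = \arg\max_i (\breve z^i_t)^\TRANS \breve\Sigma_t \breve z^i_t/(\breve\sigma^i)^2$, which guarantees for every $i$ that $(\breve z^i_t)^\TRANS \breve\Sigma_t \breve z^i_t \le (\breve\sigma^i)^2 (\breve z^{j_t}_t)^\TRANS \breve\Sigma_t \breve z^{j_t}_t/(\breve\sigma^{j_t})^2$, so summing over $i$ (and using $\sum_i(\breve\sigma^i)^2 = (n-L)\sigma_w^2$) yields
\[
  \sum_{i\in N} (\breve z^i_t)^\TRANS \breve\Sigma_t \breve z^i_t
  \le (n-L)\,\sigma_w^2\, \frac{(\breve z^{j_t}_t)^\TRANS \breve\Sigma_t \breve z^{j_t}_t}{(\breve\sigma^{j_t})^2}.
\]
This collapses the aggregate information term over all $n$ auxiliary systems into $(n-L)$ copies of the single-subsystem term that the standard Gaussian-regression determinant bound already controls by $\tildeO(\sqrt{T})$, delivering $q_0\sum_i \breve R^i(T) \le q_0(n-L)\,\tildeO(\sigma_w^2 d_x^{0.5}(d_x+d_u)\sqrt{T})$.

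Finally I would combine the two pieces: summing the eigen bound over $\ell$ and adding the auxiliary bound gives $\tildeO\big((\sum_{\ell=1}^L q^\ell + q_0(n-L))\sigma_w^2 d_x^{0.5}(d_x+d_u)\sqrt{T}\big) = \tildeO(\alpha^{\mathcal{G}}\sigma_w^2 d_x^{0.5}(d_x+d_u)\sqrt{T})$. Throughout, assumption (A8) keeps the closed-loop eigen- and auxiliary-state trajectories, and hence the quadratic value functions, uniformly bounded in expectation, which is needed both for the telescoping value-function terms and for controlling the growth of $\breve\Sigma_t^{-1}$; (A5) ensures the auxiliary noise is nondegenerate so the posterior updates and the ratio in the selection rule are well defined.
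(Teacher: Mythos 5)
Your proposal is correct and follows essentially the same route as the paper: the same decomposition of $R(T)$ into eigen and auxiliary contributions, reduction of the eigen part to the single-system \texttt{TSDE} bound of the cited work (the paper phrases this as $R^{\ell,i}(T) = (v^{\ell,i}/v^{\Iell})^2 R^{\Iell}(T)$ rather than your equivalent collapse onto $y^\ell_t = x_t v^\ell$), and a first-principles Bellman-equation analysis of the auxiliary part whose crucial ingredient is precisely the selection-rule inequality you state. The final accounting via $\sum_{i\in N}(v^{\ell,i})^2 = 1$ and $\sum_{i\in N}(\breve v^i)^2 = n-L$ is also identical to the paper's concluding step.
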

 See Section~\ref{sec:proof} for proof.

\begin{remark}
The term $\alpha^{\mathcal{G}}$ in the regret bound partially captures the impact of the network on the regret. The coefficients $r_0$ and $\{r^\ell\}_{\ell =1}^L$ depend on the network and also affect the regret but their dependence is hidden inside the $\tildeO(\cdot)$ notation. It is possible to explicitly characterize this dependence but doing so does not provide any additional insights. We discuss the impact of the network coupling on the regret in Section \ref{sec:examples} via some examples.
\end{remark}

\begin{remark}
  The regret per subsystem is given by $R(T;\textup{\texttt{Net-TSDE}})/n$, which is proportional to 
\[
  \alpha^{\mathcal{G}}/n = \mathcal{O}\Bigl(\frac{L}{n}\Bigr)
  + \mathcal{O}\Bigl(\frac{n-1}{n}\Bigr) 
  = \mathcal{O}\Bigl(1 + \frac{L}{n}\Bigr).
\]
Thus, the regret per-subsystem scales as $\mathcal{O}(1+L/n)$. In contrast, for the
standard \texttt{TSDE}
algorithm~\cite{ouyang2017control,ouyang2019posterior,altproof}, the regret
per subsystem is proportional to
\(
  \alpha^{\mathcal{G}}({\tt TSDE})/n = \mathcal{O}(n^{0.5}).
\)
This clearly illustrates the benefit of the proposed learning algorithm.
\end{remark}

\section{Regret analysis}\label{sec:proof}
For the ease of notation, we simply use $R(T)$ instead of $R(T;\texttt{Net-TSDE})$
in this section. Based on Proposition~\ref{prop:spectral}
and~Theorem~\ref{thm:planning}, the regret may be decomposed as
\begin{equation}\label{eq:regret-split}
  R(T) =  
  \sum_{i \in N} q_0\breve R^{i}(T)
  +
  \sum_{i \in N}\sum_{\ell=1}^{L}q^\ell R^{\ell,i}(T) 
\end{equation}
where 
\begin{align*}
  \breve{R}^{i}(T) &:= \EXP\biggl[ \sum_{t=1}^T
    \breve{c}(\breve{x}_t^{i},\breve{u}_t^{i} ) - T
    \breve{J}^i(\breve{\theta})
  \biggr], 
  \\
  \intertext{and, for $\ell \in \{1, \dots, L\}$,}
  R^{\ell,i}(T) &:= \EXP \biggl[ \sum_{t=1}^T  {c}^\ell({x}^{\ell,i}_t,{u}^{\ell,i}_t) -
  T {J}^{\ell,i}({\theta^{\ell}}) \biggr].
\end{align*}

Based on the discussion at the beginning of Sec.~\ref{sec:planning}, $\breve
q_0 \breve R^i(T)$, $i \in N$, is the regret associated with auxiliary
system~$i$ and $q^\ell R^{\ell,i}(T)$, $\ell \in \{1,\dots, L\}$ and $i \in
N$, is the regret associated with eigensystem~$(\ell,i)$. We now bound $\breve
R^i(T)$ and $R^{\ell,i}(T)$ separately.

\subsection{Bound on $R^{\ell,i}(T)$}

Fix $\ell \in \{1,\dots, L\}$. For the component~$\iell$,
the \texttt{Net-TSDE} algorithm is exactly same as the variation of the
\texttt{TSDE} algorithm of~\cite{ouyang2019posterior} presented
in~\cite{altproof}. Therefore, from~\cite[Theorem~1]{altproof}, it follows
that
\begin{equation}\label{eq:R-iell}
  R^{\Iell}(T) \le \tildeO\bigl( (\sigma^{\Iell})^2 d_x^{0.5}(d_x + d_u) \sqrt{T}) 
  \bigr).
\end{equation}
We now show that the regret of other eigensystems $(\ell,i)$ with $i \neq
\iell$ also satisfies a similar bound.

\begin{lemma}\label{lem:eigen-regret}
  The regret of eigensystem $(\ell, i)$, $\ell \in \{1, \dots, L\}$ and $i \in
  N$, is bounded as follows:
  \begin{equation}\label{eq:R-ell}
    R^{\ell,i}(T) \leq  
    \tildeO\bigl( (\sigma^{\ell,i})^2  d_x^{0.5} (d_x+d_u) \sqrt{T} 
    \bigr).
  \end{equation}
\end{lemma}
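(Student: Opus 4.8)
The plan is to exploit the fact that, for a fixed $\ell$, every eigensystem $(\ell,i)$ is driven by the \emph{same} control gains and by perfectly correlated (scaled) copies of a single noise process, so that the whole trajectory of $(\ell,i)$ is a deterministic rescaling of the trajectory of the distinguished component $(\ell,\iell)$, whose regret is already controlled by~\eqref{eq:R-iell}. The regret then inherits the same rescaling exactly, and substituting~\eqref{eq:R-iell} finishes the proof.

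First I would record the scaling relations that follow directly from the definitions in Sec.~\ref{sec:spectral}. Writing $\eta^\ell_t := w_t v^\ell$, the $i$-th column of $w^\ell_t = w_t v^\ell (v^\ell)^\TRANS$ is $w^{\ell,i}_t = v^{\ell,i}\,\eta^\ell_t$, and likewise $x^{\ell,i}_1 = v^{\ell,i}\,(x_1 v^\ell)$. Hence the driving noise and the initial condition of $(\ell,i)$ are the common quantities $\eta^\ell_t$ and $x_1 v^\ell$, scaled by $v^{\ell,i}$. The degenerate case $v^{\ell,i}=0$ is disposed of immediately: the eigenstate is then identically zero, the cost $c^\ell(x^{\ell,i}_t,u^{\ell,i}_t)$ vanishes, and $J^{\ell,i}(\theta^\ell)=(v^{\ell,i})^2\TR(W S^\ell)=0$, so $R^{\ell,i}(T)=0$ and~\eqref{eq:R-ell} holds trivially (its right-hand side is also zero, since $(\sigma^{\ell,i})^2=0$).

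For $i$ with $v^{\ell,i}\neq 0$ the key step is a pathwise induction. The algorithm maintains a single posterior $p^\ell_t$, a single episode schedule, and a single sampled parameter $\theta^\ell_k$ per episode---all functions of the history of component $\iell$ only---and applies the common gain $G^\ell(\theta^\ell_k)$ to every column via $u^{\ell,i}_t = G^\ell(\theta^\ell_k)\,x^{\ell,i}_t$. Thus within each episode both $(\ell,i)$ and $(\ell,\iell)$ obey the identical closed-loop recursion $x_{t+1} = (A^\ell + B^\ell G^\ell(\theta^\ell_k))x_t + (\text{noise})$, with noises $v^{\ell,i}\eta^\ell_t$ and $v^{\Iell}\eta^\ell_t$ and with initial conditions scaled by $v^{\ell,i}$ and $v^{\Iell}$. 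A straightforward induction on $t$ then yields, almost surely,
\[
  x^{\ell,i}_t = \frac{v^{\ell,i}}{v^{\Iell}}\, x^{\Iell}_t
  \quad\text{and}\quad
  u^{\ell,i}_t = \frac{v^{\ell,i}}{v^{\Iell}}\, u^{\Iell}_t .
\]

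Finally I would propagate this proportionality through the cost and the optimal performance. Since $c^\ell$ is quadratic, $c^\ell(x^{\ell,i}_t,u^{\ell,i}_t) = (v^{\ell,i}/v^{\Iell})^2\, c^\ell(x^{\Iell}_t,u^{\Iell}_t)$, and from~\eqref{eq:J-ell} $J^{\ell,i}(\theta^\ell) = (v^{\ell,i}/v^{\Iell})^2\, J^{\Iell}(\theta^\ell)$; consequently $R^{\ell,i}(T) = (v^{\ell,i}/v^{\Iell})^2\, R^{\Iell}(T)$. Using $(\sigma^{\ell,i})^2 = (v^{\ell,i})^2\sigma_w^2$ from~\eqref{eq:sigma}, this factor equals $(\sigma^{\ell,i})^2/(\sigma^{\Iell})^2$, and substituting~\eqref{eq:R-iell} gives $R^{\ell,i}(T) \le \tildeO\bigl((\sigma^{\ell,i})^2 d_x^{0.5}(d_x+d_u)\sqrt{T}\bigr)$, as claimed. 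The only point requiring care---the main obstacle, such as it is---is justifying that the sampled parameters and the episode boundaries are genuinely shared across all columns $i$ (they are determined by $p^\ell_t$, hence by $\iell$'s data alone), so that the common-gain hypothesis of the induction is valid; everything else is routine bookkeeping.
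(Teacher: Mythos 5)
Your proposal is correct and takes essentially the same route as the paper: establish the pathwise proportionality $x^{\ell,i}_t = (v^{\ell,i}/v^{\Iell})\, x^{\Iell}_t$ and $u^{\ell,i}_t = (v^{\ell,i}/v^{\Iell})\, u^{\Iell}_t$, propagate it through the quadratic cost and~\eqref{eq:J-ell} to get $R^{\ell,i}(T) = (v^{\ell,i}/v^{\Iell})^2 R^{\Iell}(T)$, and substitute~\eqref{eq:R-iell} using $(v^{\ell,i}/v^{\Iell})^2 = (\sigma^{\ell,i}/\sigma^{\Iell})^2$. The only cosmetic difference is that the paper obtains the proportionality in one line from the definition $x^\ell_t = x_t v^\ell (v^\ell)^\TRANS$ (whose $i$-th column is $v^{\ell,i}\, x_t v^\ell$), so no induction over the closed-loop dynamics is needed, though your induction is also valid since the gains and episode boundaries are shared across columns.
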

\begin{proof}
  Fix $\ell \in \{1,\dots, L\}$. 
  Recall from~\eqref{eq:xell} that $x^\ell_t = x_t v^\ell (v^\ell)^\TRANS$.
  Therefore, for any $i \in N$, 
  \[
    x^{\ell,i}_t = x_t v^\ell v^{\ell,i} = v^{\ell,i} x_t v^\ell  ,
  \]
  where the last equality follows because $v^{\ell,i}$ is a scalar. 
  Since we are using the same gain $G^\ell(\theta^\ell_k)$ for all agents $i
  \in N$, we have
  \[
    u^{\ell,i}_t = G^\ell(\theta^\ell_k) x^{\ell,i}_t =
    v^{\ell,i} G^{\ell}(\theta^\ell_k) x_t v^\ell.
  \]
  Thus, we can write (recall that $\iell$ is chosen such that $v^{\Iell} \neq
  0$),
  \[
    x^{\ell,i}_t = \Bigl( \frac{v^{\ell,i}}{v^{\Iell}} \Bigr) x^{\Iell}_t
    \text{ and }
    u^{\ell,i}_t = \Bigl( \frac{v^{\ell,i}}{v^{\Iell}} \Bigr) u^{\Iell}_t,
    \quad \forall i \in N.
  \]
  Thus, for any $i \in N$, 
  \begin{equation}\label{eq:cost-prop}
    c^\ell(x^{\ell,i}_t, u^{\ell,i}_t) = 
    \Bigl( \frac{v^{\ell,i}}{v^{\Iell}} \Bigr)^2
    c^\ell(x^{\Iell}_t, u^{\Iell}_t).
  \end{equation}
  Moreover, from~\eqref{eq:J-ell}, we have
  \begin{equation}\label{eq:performance-prop}
    J^{\ell,i}(\theta^\ell) = 
    \Bigl( \frac{v^{\ell,i}}{v^{\Iell}} \Bigr)^2 J^{\Iell}(\theta^\ell).
  \end{equation}
  By combining~\eqref{eq:cost-prop} and~\eqref{eq:performance-prop}, we get
  \[
    R^{\ell,i}(T) = \Bigl( \frac{v^{\ell,i}}{v^{\Iell}} \Bigr)^2 R^{\Iell}(T).
  \]
  Substituting the bound for $R^{\Iell}(T)$ from~\eqref{eq:R-iell} and observing that
  $(v^{\ell,i}/v^{\Iell})^2 = (\sigma^{\ell,i}/\sigma^{\Iell})^2$ gives the result.
\end{proof}

\subsection{Bound on $\breve R^{i}(T)$}

The update of the posterior $\breve p_t$ on $\breve \theta$ does not depend on
the history of states and actions of any fixed agent~$i$. Therefore, we
cannot directly use the argument presented in~\cite{altproof} to bound the
regret $\breve R^i(T)$. We present a bound from first principles below.

For the ease of notation, for any episode $k$, we use $\brMAT G_k$ and $\brMAT
S_k$ to denote $\brMAT G(\breve \theta_k)$ and $\brMAT S(\breve \theta_k)$ respectively. 
From LQ optimal control theory~\cite{kumar2015stochastic}, we know  that the average cost $\breve J^i(\breve \theta_k)$ and the optimal  policy $\breve u^i_t = \brMAT G_k \breve x^i_t$ for the model parameter $\breve \theta_k$ satisfy the following Bellman equation:
\begin{multline*}
  \breve J^i(\breve \theta_k) + (\breve x^i_t)^\TRANS \brMAT S_k \breve x^i_t =
  \breve c(\breve x^i_t, \breve u^i_t)  \\ + 
  \EXP\bigl[ 
    \bigl(\breve \theta_k^\TRANS \breve z^i_t + \breve w^i_t\bigr)^\TRANS
    \brMAT S_k \bigl(\breve \theta_k^\TRANS \breve z^i_t + \breve w^i_t\bigr) 
  \bigr].
\end{multline*}
Adding and subtracting 
\(  
  \EXP[ (\breve x^i_{t+1})^\TRANS \brMAT S_k \breve x^i_{t+1} \mid \breve z^i_t] 
\)
and noting that $\breve x^i_{t+1} = \breve \theta^\TRANS \breve z^i_t + \breve
w^i_t$, we get that
\begin{align}
   \breve c & (\breve x^i_t, \breve u^i_t) =
  \breve J^i(\breve \theta_k) 
   + (\breve x^i_t)^\TRANS \brMAT S_k \breve x^i_t 
  -
  \EXP[ (\breve x^i_{t+1})^\TRANS \brMAT S_k \breve x^i_{t+1} | \breve z^i_t] 
  \notag \\
&+ ( \breve \theta^\TRANS \breve z^i_t)^\TRANS \brMAT S_k ((\breve
\theta)^\TRANS
  \breve z^i_t)
  - ( \breve \theta_k^\TRANS \breve z^i_t)^\TRANS \brMAT S_k ((\breve
  \theta_k)^\TRANS \breve z^i_t).
  \label{eq:bellman_breve}
\end{align}
Let $\breve K_T$ denote the number of episodes of the auxiliary actor until
horizon $T$. For each $k > \breve K_T$, we define $\breve t_k$ to be $T+1$. Then, using~\eqref{eq:bellman_breve}, we have that for any
agent~$i$,
\begin{align}
  \breve R^{i}&(T) =
  \underbrace{%
    \EXP\biggl[ \sum_{k=1}^{\breve K_T} \breve T_k \breve J^i(\breve \theta_k) - T 
    \breve J^i(\breve \theta) \biggr]
  }_{\text{regret due to sampling error\,} \eqqcolon \breve R^{i}_0(T)}
  \displaybreak[1]
  \notag\\
  & +
  \underbrace{%
   \EXP\biggl[ 
    \sum_{k=1}^{\breve K_T} \sum_{t = \breve t_k}^{\breve t_{k+1} - 1}
    \bigl[ (\breve x^i_t)^\TRANS \brMAT S_k \breve x^i_t -
    (\breve x^i_{t+1})^\TRANS \brMAT S_k \breve x^i_{t+1} \bigr]
   \biggr]
   }_{\text{regret due to time-varying controller\,} \eqqcolon \breve R^{i}_1(T)}
   \displaybreak[1]
  \notag \\
  & + 
  \underbrace{%
    \begin{aligned}[t]
   \EXP\biggl[ 
    \sum_{k=1}^{\breve K_T} \sum_{t = \breve t_k}^{\breve t_{k+1} - 1}
    \bigl[ ( \breve \theta^\TRANS \breve z^i_t)^\TRANS & \brMAT S_k ((\breve
      \theta)^\TRANS \breve z^i_t) 
      \\[-15pt] -
   &( \breve \theta_k^\TRANS \breve z^i_t)^\TRANS \brMAT S_k ((\breve
 \theta_k)^\TRANS \breve z^i_t) \bigr]
  \biggr].
   \end{aligned}}_{\text{regret due to model mismatch\,} \eqqcolon \breve R^{i}_2(T)}
   \label{eq:regret-components}
\end{align}

\begin{lemma}\label{lem:breve_regret_terms}\label{lem:aux-regret}
  The terms in~\eqref{eq:regret-components} are bounded as follows:
  \begin{enumerate}
    \item $\breve R^{i}_0(T) \le 
    \tilde{\mathcal{O}} ((\breve \sigma^i)^2 (d_x + d_u)^{0.5} \sqrt{T})$.
    \item $\breve R^{i}_1(T) \le 
    \tilde{\mathcal{O}} ((\breve \sigma^i)^2 (d_x + d_u)^{0.5} \sqrt{T})$.
    \item $\breve R^{i}_2(T) \le 
      \tilde{\mathcal{O}} ((\breve \sigma^i)^2 d_x^{0.5} (d_x + d_u) \sqrt{T})$.
  \end{enumerate}
  Combining these three, we get that
  \begin{equation}\label{eq:R-breve}
    \breve R^i(T) \le
    \tilde{\mathcal{O}} ((\breve \sigma^i)^2 d_x^{0.5} (d_x + d_u) \sqrt{T}).
  \end{equation}
\end{lemma}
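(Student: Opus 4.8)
The plan is to bound the three pieces $\breve R^i_0(T)$, $\breve R^i_1(T)$, $\breve R^i_2(T)$ of~\eqref{eq:regret-components} along the \texttt{TSDE} template of~\cite{altproof}, replacing every step that relies on the posterior being driven by agent~$i$'s own data with an argument built on the information-maximizing rule for $j_t$. Two preliminary estimates feed all three bounds. First, a uniform moment bound $\sup_t \EXP[\|\breve x^i_t\|^2] = \tildeO((\breve\sigma^i)^2)$: assumption (A8) forces every closed-loop matrix $A_{\breve\theta}+B_{\breve\theta}\brMAT{G}(\breve\phi)$ to have spectral radius at most $\delta<1$, so the auxiliary state, driven by noise of variance $(\breve\sigma^i)^2 I$, is mean-square bounded uniformly in $t$; compactness of $\breve\Theta$ gives uniform bounds $\|\brMAT{S}_k\|\le\bar S$ and $\|\breve\theta\|,\|\breve\theta_k\|\le\bar\theta$. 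Second, a bound on the episode count $\EXP[\breve K_T]\le\tildeO((d_x+d_u)^{0.5}\sqrt T)$: the length rule $t-\breve t_k>\breve T_{k-1}$ alone yields $O(\sqrt T)$ episodes, while the determinant rule contributes the $(d_x+d_u)^{0.5}$ factor through the telescoping of $\log\det\breve\Sigma_t$, exactly as in~\cite{ouyang2017control,altproof}.

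For $\breve R^i_0(T)$ I would use the defining property of Thompson sampling: $\breve t_k$ is a stopping time and $\breve\theta_k$ is drawn from $\breve p_{\breve t_k}$, the conditional law of $\breve\theta$ given $\mathcal F_{\breve t_k}$, so $\EXP[\breve J^i(\breve\theta_k)\mid\mathcal F_{\breve t_k}]=\EXP[\breve J^i(\breve\theta)\mid\mathcal F_{\breve t_k}]$. The length $\breve T_k$ is not $\mathcal F_{\breve t_k}$-measurable, but the first stopping rule guarantees $\breve T_k\le\breve T_{k-1}+1$, and $\breve T_{k-1}+1$ is $\mathcal F_{\breve t_k}$-measurable; substituting this bound and telescoping $\sum_k\breve T_{k-1}\le T+\breve T_0$ collapses the term to $\breve R^i_0(T)\le\bar J\,\EXP[\breve K_T]$ with $\bar J=\sup\breve J^i(\cdot)=\tildeO((\breve\sigma^i)^2)$. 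For $\breve R^i_1(T)$ the inner sum telescopes within each episode to its two endpoints; regrouping across episode boundaries rewrites the expression as $\EXP[(\breve x^i_{\breve t_1})^\TRANS\brMAT{S}_1\breve x^i_{\breve t_1}+\sum_k(\breve x^i_{\breve t_{k+1}})^\TRANS(\brMAT{S}_{k+1}-\brMAT{S}_k)\breve x^i_{\breve t_{k+1}}]$ minus a nonpositive final term, and each summand is controlled by $2\bar S\,\EXP[\|\breve x^i\|^2]$. Both terms thus reduce to $\EXP[\breve K_T]$ times a bounded state moment and are of the claimed $(d_x+d_u)^{0.5}\sqrt T$ order (in any case sub-dominant relative to $\breve R^i_2$).

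The term $\breve R^i_2(T)$ is dominant and carries the real work. Writing $\Delta_k=\breve\theta-\breve\theta_k$ and factoring the quadratic difference as $(\Delta_k^\TRANS\breve z^i_t)^\TRANS\brMAT{S}_k(\breve\theta+\breve\theta_k)^\TRANS\breve z^i_t$, I would bound each summand by $\bar S\bigl(\|\breve\theta^\TRANS\breve z^i_t\|+\|\breve\theta_k^\TRANS\breve z^i_t\|\bigr)\,\|\Delta_k^\TRANS\breve z^i_t\|$ and apply Cauchy--Schwarz over time (once within each episode, once across episodes) to reduce the bound to a bounded first factor times $\sqrt T\,\sqrt{\EXP[\sum_k\sum_t\|\Delta_k^\TRANS\breve z^i_t\|^2]}$. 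Since $\breve\theta$ and $\breve\theta_k$ are conditionally independent draws from $\breve p_{\breve t_k}$ with per-column covariance $\breve\Sigma_{\breve t_k}$, the normalized error $\breve\Sigma_{\breve t_k}^{-1/2}\Delta_k$ has controlled spectral norm and $\|\Delta_k^\TRANS\breve z^i_t\|^2\le\|\breve\Sigma_{\breve t_k}^{-1/2}\Delta_k\|^2\,(\breve z^i_t)^\TRANS\breve\Sigma_{\breve t_k}\breve z^i_t$; the determinant stopping rule lets me replace $\breve\Sigma_{\breve t_k}$ by $\breve\Sigma_t$ up to a constant. The $d_x^{0.5}(d_x+d_u)\sqrt T$ scaling then emerges exactly as in the single-system bound of~\cite{altproof}: a factor $(d_x+d_u)$ from the spectral norm of the normalized error together with the log-determinant potential, and a factor $d_x^{0.5}$ from the time Cauchy--Schwarz over the $d_x$-dimensional state.

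The main obstacle, and the only genuinely new step relative to~\cite{altproof}, is that $\breve\Sigma_t$ contracts along the data of the \emph{selected} agent $j_t$, not along agent~$i$, so the usual elliptical-potential argument does not directly control $\sum_t(\breve z^i_t)^\TRANS\breve\Sigma_t\breve z^i_t$. This is resolved by the selection rule itself: by definition $j_t$ maximizes $(\breve z^i_t)^\TRANS\breve\Sigma_t\breve z^i_t/(\breve\sigma^i)^2$ over $i$, hence $(\breve z^i_t)^\TRANS\breve\Sigma_t\breve z^i_t\le (\breve\sigma^i/\breve\sigma^{j_t})^2\,(\breve z^{j_t}_t)^\TRANS\breve\Sigma_t\breve z^{j_t}_t$ for every $i$. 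The right-hand side is precisely the increment entering the update~\eqref{eq:sigma_breve_update}, so the matrix-determinant lemma and telescoping of $\log\det\breve\Sigma_t$ give $\sum_{t=1}^T (\breve\sigma^{j_t})^{-2}(\breve z^{j_t}_t)^\TRANS\breve\Sigma_t\breve z^{j_t}_t\le\tildeO(d_x+d_u)$, where boundedness of the states lets me pass from $\log(1+w_t)$ to $w_t$. Consequently $\sum_t(\breve z^i_t)^\TRANS\breve\Sigma_t\breve z^i_t\le(\breve\sigma^i)^2\,\tildeO(d_x+d_u)$, which closes the chain and yields $\breve R^i_2(T)\le\tildeO((\breve\sigma^i)^2 d_x^{0.5}(d_x+d_u)\sqrt T)$. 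Adding the three bounds gives~\eqref{eq:R-breve}.
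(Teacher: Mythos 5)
Your proposal follows the paper's proof in all structural respects---the same treatment of $\breve R^i_0$ via the Thompson-sampling property at stopping times, telescoping for $\breve R^i_1$, and, for the dominant term $\breve R^i_2$, the same key idea: the selection rule $j_t=\arg\max_{i}(\breve z^i_t)^\TRANS\breve\Sigma_t\breve z^i_t/(\breve\sigma^i)^2$ lets you dominate agent $i$'s design term by agent $j_t$'s, which is the one the covariance update \eqref{eq:sigma_breve_update} actually contracts along, so the log-determinant potential argument can be run on the $j_t$-sequence (this is exactly \eqref{eq:new-2}--\eqref{eq:new-3} in the paper). However, two steps as written would fail. First, your preliminary estimate is a per-time moment bound $\sup_t\EXP\|\breve x^i_t\|^2=\tildeO((\breve\sigma^i)^2)$, and you later invoke ``boundedness of the states'' to pass from $\log(1+w_t)$ to $w_t$. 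The states are unbounded, and every place these bounds are needed involves either a random number of summands indexed by stopping times or random multiplicative factors correlated with the states: in $\breve R^i_1$, the step ``each summand is controlled by $2\bar S\,\EXP\|\breve x^i\|^2$'' is not legitimate, because $\breve K_T$ and the states $\breve x^i_{\breve t_k}$ are dependent, so the expectation cannot be pushed inside a randomly indexed sum. What is actually required is control of the trajectory maximum $\breve X^i_T=\max_{t\le T}\|\breve x^i_t\|$ (the paper's Lemmas~\ref{lem:Xq} and~\ref{lem:Xqlog}, giving $\EXP[(\breve X^i_T/\breve\sigma^i)^q]\le\mathcal{O}(\log T)$) together with a \emph{pathwise}, not merely in-expectation, bound on $\breve K_T$ (Lemma~\ref{lem:episodes}), combined via Cauchy--Schwarz and Jensen.

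Second, in $\breve R^i_2$ your first Cauchy--Schwarz (over time, against the factor $\|\breve\theta^\TRANS\breve z^i_t\|+\|\breve\theta_k^\TRANS\breve z^i_t\|$) leaves the single expectation $\EXP[\sum_k\sum_t\|(\breve\theta-\breve\theta_k)^\TRANS\breve z^i_t\|^2]$, inside which you then insert $\breve\Sigma_{\breve t_k}^{\pm1/2}$. This does not close: during episode $k$ the controls are generated using the sampled $\breve\theta_k$, so $(\breve\theta-\breve\theta_k)$ and $\breve z^i_t$ are dependent; the expectation of the product $\|\breve\Sigma_{\breve t_k}^{-1/2}(\breve\theta-\breve\theta_k)\|^2\,(\breve z^i_t)^\TRANS\breve\Sigma_{\breve t_k}\breve z^i_t$ does not factor, and since the sum has $T$ terms, bounding either factor by its worst case destroys the $\sqrt T$ rate. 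The paper arranges the Cauchy--Schwarz differently precisely to avoid this (see \eqref{eq:boundR2_2_breve}): one factor is $\EXP[\sum_{k,t}\|\breve\Sigma_{\breve t_k}^{-1/2}(\breve\theta-\breve\theta_k)\|^2]$ alone, handled term-by-term by the tower property at the episode start (conditionally, $\breve\theta$ and $\breve\theta_k$ are i.i.d.\ draws from the posterior), giving $\mathcal{O}(d_x(d_x+d_u)T)$; the other is $\EXP[(\breve X^i_T)^2\sum_{k,t}\|\breve\Sigma_{\breve t_k}^{1/2}\breve z^i_t\|^2]$, which is where the selection rule, the potential argument, and the max-norm moment lemmas are applied, giving $\tildeO((d_x+d_u)(\breve\sigma^i)^4)$. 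Your route could likely be repaired (e.g., by a further Cauchy--Schwarz against $\max_k\|\breve\Sigma_{\breve t_k}^{-1/2}(\breve\theta-\breve\theta_k)\|^2$), but as written the final assembly of the $(\breve\sigma^i)^2 d_x^{0.5}(d_x+d_u)\sqrt T$ bound does not follow.
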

See Appendix for the proof.

\subsection{Proof of Theorem~\ref{thm:main}}

For ease of notation, let $R^* = \tilde{\mathcal{O}} (d_x^{0.5} (d_x + d_u)
\sqrt{T})$. Then, by subsituting the result of
Lemmas~\ref{lem:eigen-regret} and~\ref{lem:aux-regret}
in~\eqref{eq:regret-split}, we get that
\begin{align}
  R(T) &\le \sum_{i \in N} q_0 (\breve \sigma^i)^2 R^*
  + \sum_{i \in N} \sum_{\ell = 1}^L q^\ell (\sigma^{\ell,i})^2 R^*
  \notag \\
  &\stackrel{(a)}= \sum_{i \in N} q_0 (\breve v^i)^2 \sigma_w^2 R^*
  + \sum_{i \in N} \sum_{\ell = 1}^L q^\ell (v^{\ell,i})^2 \sigma_w^2 R^*
  \notag \\
  &\stackrel{(b)}= \Bigl( \textstyle q_0 (n-L) + \sum\limits_{\ell=1}^L q^\ell \Bigr)
     \sigma_w^2 R^*,
  \label{eq:regret-bound-proof}
\end{align}
where $(a)$ follows from~\eqref{eq:sigma} and $(b)$ follows from observing
that $\sum_{i \in N} (v^{\ell,i})^2 = 1$ and therefore $\sum_{i \in N} (\breve
v^i)^2 = n - L$. Eq.~\eqref{eq:regret-bound-proof} establishes the result of
Theorem~\ref{thm:main}.

\section{Some examples}\label{sec:examples}

\subsection{Mean-field system}\label{sec:mf}

Consider a complete graph $\mathcal{G}$  where the edge weights are equal to $1/n$. Let $M$ be equal to the adjacency matrix of the graph, i.e., $M =  \tfrac{1}{n}\mathds{1}_{n \times n}$. Thus, the system dynamics are given by  
\[
  x^i_{t+1} = A x^i_t + B u^i_t +  D \bar x_t +  E \bar u_t + w^i_t,
\]
where $\bar x_t = \tfrac{1}{n} \sum_{i \in N} x^i_t$ and $\bar u_t = \tfrac{1}{n} \sum_{i \in N} u^i_t$.
Suppose $K_x = K_u = 1$ and $q_0 = r_0 = 1/n$ and $q_1 = r_1 = \kappa/n$, where $\kappa$ is a positive constant. 

In this case, $M$ has rank $L = 1$,  the non-zero eigenvalue of $M$ is $\lambda^1 = 1$, the corresponding normalized eigenvector is $\tfrac{1}{\sqrt{n}} \mathds{1}_{n \times 1}$ and $q^1 = r^1 = q_0 + q_1 = (1 + \kappa)/n$.
The eigenstate is given by $x^1_t = [\bar x_t, \dots, \bar x_t]$ and a similar structure holds for the eigencontrol $u^1_t$. 
The per-step cost can be written as (see Proposition~\ref{prop:spectral})
\begin{align*}
  &c(x_t, u_t) = 
     (1 + \kappa) \bigl[
    \bar x_t^\TRANS Q \bar x_t + \bar u_t^\TRANS R \bar u_t \bigr].
    \\
  & \quad  +
  \frac1n \sum_{i \in N}\bigl[ 
    (x^i_t - \bar x_t)^\TRANS Q (x^i_t - \bar x_t) + 
    (u^i_t - \bar u_t)^\TRANS R (u^i_t - \bar u_t) \bigr]
\end{align*}
Thus, the system is similar to the mean-field team system investigated in~\cite{arabneydi2015mft, arabneydi2016mft}. 

For this model, the network dependent constant $\alpha^{\mathcal{G}}$ in the regret bound of Theorem~\ref{thm:main} is given by 
\(
  \alpha^{\mathcal{G}} = \bigl(1+\frac{\kappa}{n} \bigr)
   = \mathcal{O}\bigl( 1 + \frac1n \bigr).
\)
Thus, for the mean-field system, the regret of \texttt{Net-TSDE} scales as
$\mathcal{O}(1 + \tfrac1n)$ with the number of agents. This is consistent with the discussion following Theorem~\ref{thm:main}.

We test these conclusions via numerical simulations of a scalar mean-field model with $d_x = d_u = 1$, $\sigma_w^2 = 1$, $A = 1$, $B = 0.3$, $D=0.5$, $E=0.2$, $Q=1$, $R=1$, and $\kappa = 0.5$. The uncertain sets are chosen as 
\( \Theta^1 = \{ \theta^1 \in \reals^2 : A+D + (B+E)G^1(\theta^1) < \delta \} \)
and
\( \breve \Theta = \{ \breve \theta \in \reals^2: A + B \breve G(\breve \theta) < \delta \} \)
where $\delta = 0.99$. The prior over these uncertain sets is chosen according to (A6)--(A7) where $\breve \mu_1 = \mu^1_1 = [1, 1]^\TRANS$ and $\breve \Sigma_1 = \Sigma^1_1 = I$. We set $T_{\min} = 0$ in \texttt{Net-TSDE}.
The system is simulated for a horizon of $T = 5000$ and the expected regret $R(T)$ averaged over $500$ sample trajectories is shown in Fig.~\ref{fig:mean-field}. As expected, the regret scales as $\tildeO(\sqrt{T})$ with time and $\mathcal{O}\bigl(1 + \frac1n\bigr)$ with the number of agents.

\begin{figure}[!t]
  \centering
  \begin{subfigure}[t]{0.49\linewidth}
    \includegraphics[width=\textwidth]{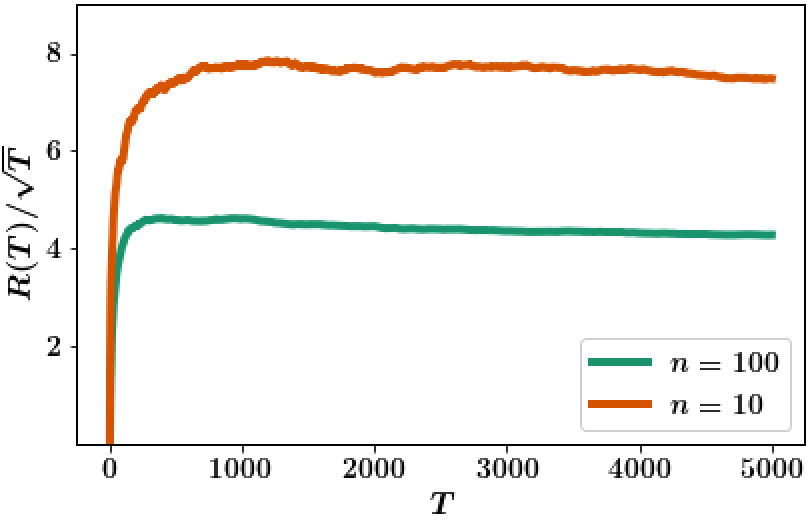}
    \caption{$R(T)/\sqrt{T}$ vs $T$}
  \end{subfigure}
  \hfill
  \begin{subfigure}[t]{0.49\linewidth}
    \includegraphics[width=\textwidth]{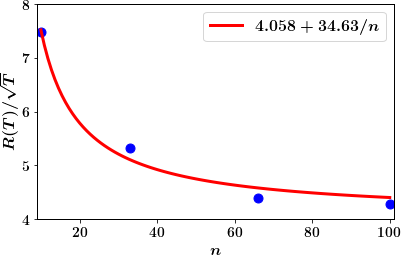}
    \caption{$R(T)/\sqrt{T}$  vs number of agents.}
  \end{subfigure}
  \caption{Regret for mean-field system.}
  \label{fig:mean-field}
\end{figure}

\begin{delete}
  \blue{
Note that the coupling graph $M$ in this model is dense. Nonetheless, the regret-per agent remains almost constant with the number of agents. This is in contrast to the scaling results for factored MDPs~\cite{NIPS2014_0deb1c54, chen2021efficient}, where regret scaling is established under the assumption of sparse interaction between agents. 
}
\end{delete}

\subsection{A general low-rank network}

\begin{figure}[!htb]
  \centering
\begin{tikzpicture}[thick,scale=0.9]
    \node [agent] at (0, 0) (3) {$3$};
    \node [agent] at (1, 0) (4) {$4$}; 
    \node [agent] at (0, 1) (2) {$2$};
    \node [agent] at (1, 1) (1) {$1$};

    \path (1) edge node[above] {$a$} (2)
          (2) edge node[left]  {$a$} (3)
          (3) edge node[below] {$b$} (4)
          (4) edge node[right] {$b$} (1);

    \node at (3.5, 0.5) {$\begin{bmatrix}
           0 & a & 0 & b \\
           a & 0 & a & 0 \\
           0 & a & 0 & b \\
           b & 0 & b & 0 
          \end{bmatrix}$};
  \end{tikzpicture}
  \caption{Graph $\mathcal{G}^\circ$ with $n=4$ nodes and its adjacency matrix}
  \label{fig:graph}
\end{figure}
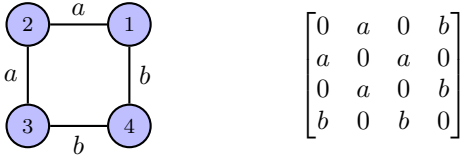

We consider a network with $4n$ nodes given by the graph $\mathcal{G} = \mathcal{G}^\circ \otimes \mathcal{C}_n$, where $\mathcal{G}^\circ$ is a 4-node graph shown in Fig.~\ref{fig:graph} and $\mathcal{C}_n$ is the complete graph with $n$ nodes and each edge weight equal to~$\frac{1}{n}$. Let $M$ be the adjacency matrix of $\mathcal{G}$  which is given as  $M = M^\circ \otimes \frac1n \mathds{1}_{n\times n}$, where $M^\circ$ is the adjacency matrix of $\mathcal{G}^\circ$ shown in Fig.~\ref{fig:graph}. Moreover, suppose $K_x = 2$ with $q_0 = 1$, $q_1 = -2$, and $q_2 = 1$ and $K_u = 0$ with $r_0 = 1$.  Note that the cost is not normalized per-agent.

In this case, the rank of $M^\circ$ is $2$ with eigenvalues $\pm \rho$, where $\rho = \sqrt{2(a^2 + b^2)}$ and the rank of $\frac1n \mathds{1}_{n\times n}$ is $1$ with eigenvalue $1$. Thus, $M = M^\circ \otimes \frac1n \mathds{1}_{n\times n}$ has the same non-zero eigenvalues as $M^\circ$ given by $\lambda^1 = \rho$ and $\lambda^2 = -\rho$. Further, $q^\ell = (1 - \lambda^\ell)^2$ and $r^\ell = 1$, for $\ell \in \{1, 2\}$. We assume that $a^2 + b^2 \neq 0.5$, so that the model satisfies (A3). 

For this model, the scaling parameter $\alpha^{\mathcal{G}}$ in the regret bound in Theorem~\ref{thm:main} is given by 
\[
  \alpha^{\mathcal{G}} = (1 - \rho)^2 + (1 + \rho)^2 + (4n - 2)
  = 4n + 2\rho^2.
\]
Recall that $\rho^2 = (\lambda^1)^2 = (\lambda^2)^2$. Thus, $\alpha^{\mathcal{G}}$ has an explicit dependence on the square of the eigenvalues and the number of nodes. 

\begin{figure}[!t]
  \centering
  \begin{subfigure}[t]{0.49\linewidth}
    \includegraphics[width=\textwidth]{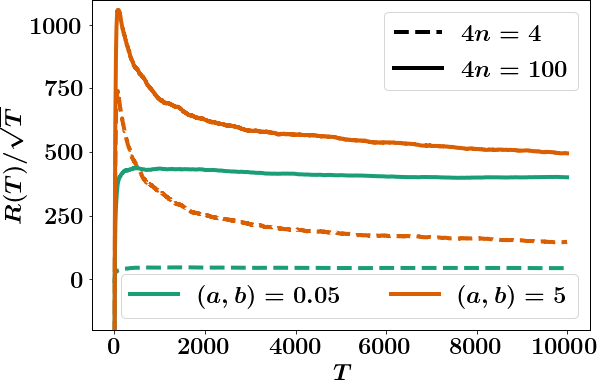}
    \caption{$R(T)/\sqrt{T}$ vs $T$}
  \end{subfigure}
  \hfill
  \begin{subfigure}[t]{0.49\linewidth}
    \includegraphics[width=\textwidth]{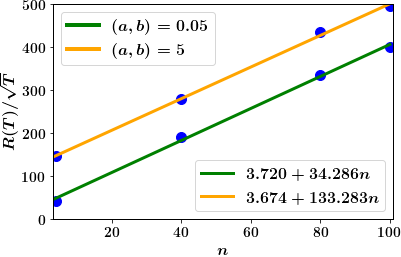}
    \caption{$R(T)/\sqrt{T}$  vs number of agents.}
  \end{subfigure}
  \caption{Regret for general low-rank network.}
  \label{fig:scaling}
\end{figure}

We verify this relationship via numerical simulations. We consider the graph above with two choices of parameters $(a,b)$: (i) $a = b = 0.05$ and (ii) $a = b = 5$. 
For both cases, we consider a scalar system with parameters same as the
mean-field system considered in Sec.~\ref{sec:mf}. The regret for both
cases with different choices of number of agents $4n \in \{4, 40, 80, 100
\}$ is shown in Fig.~\ref{fig:scaling}. As expected, the regret scales as $\tildeO(\sqrt{T})$ with time and $\mathcal{O}\bigl(4n + 2\rho^2)$ with the number of agents.

\section{Conclusion} \label{sec:conclusion}
We consider the problem of controlling   an unknown LQG system consisting of multiple subsystems connected over a network.
 By utilizing a spectral decomposition technique, we decompose the coupled subsystems into eigen and auxiliary systems.
We propose a TS-based learning algorithm \texttt{Net-TSDE} which  maintains separate posterior distributions on the unknown parameters 
$\theta^\ell, \ell \in \{1,\ldots, L\}$, and $\breve \theta$ associated with the eigen and auxiliary systems respectively.
For each eigen-system, \texttt{Net-TSDE} learns the unknown  parameter $\theta^\ell$ and controls the system in a  manner similar to the \texttt{TSDE} algorithm for single agent LQG systems proposed in~\cite{ouyang2017control, ouyang2019posterior,altproof}. Consequently, the regret for each eigen system can be bounded using the results of~\cite{ouyang2017control,ouyang2019posterior,altproof}.
However, the part of the \texttt{Net-TSDE} algorithm that performs  learning and control for the auxiliary system has an agent selection step and thus requires additional analysis to bound its regret. Combining the regret bounds for the eigen and auxiliary systems 
 shows that the total expected regret of \texttt{Net-TSDE} is upper bounded by $\tilde{\mathcal{O}}( n d_x^{0.5} ({d_x + d_u}) \sqrt{T})$.
The empirically observed scaling of  regret with respect to the time horizon $T$ and  the number of subsystems $n$ in our numerical experiments agrees with the theoretical upper bound.

The results presented in this paper rely on the spectral decomposition developed in~\cite{gao2019networked}. A limitation of this decomposition is that the local dynamics (i.e., the $(A,B)$ matrices) are assumed to be identical for all subsystems. Interesting generalizations overcoming this limitation include settings where (i)~there are multiple types of subsystems and the $(A,B)$ matrices are the same for subsystems of the same type but different across types; and (ii)~the subsystems are not identical but approximately identical, i.e., there are nominal dynamics $(A^\circ, B^\circ)$ and the local dynamics $(A^i,B^i)$ of subsystem~$i$ are in a small neighborhood of $(A^\circ, B^\circ)$.

The decomposition in~\cite{gao2019networked} exploits the fact that the dynamics and the cost couplings have the same spectrum (i.e., the same orthonormal eigenvectors). It is also possible to consider learning algorithms which exploit other features of the network such as sparsity in the case of networked MDPs~\cite{NIPS2014_0deb1c54, chen2021efficient}.

\bibliographystyle{ieeetr}
\bibliography{IEEEabrv, ref_learning}

\begin{thebibliography}{10}

\bibitem{sandell1978}
N.~Sandell, P.~Varaiya, M.~Athans, and M.~Safonov, ``Survey of decentralized
  control methods for large scale systems,'' {\em {IEEE} Trans. Autom.
  Control}, vol.~23, no.~2, pp.~108--128, 1978.

\bibitem{lunze1986dynamics}
J.~Lunze, ``Dynamics of strongly coupled symmetric composite systems,'' {\em
  International Journal of Control}, vol.~44, no.~6, pp.~1617--1640, 1986.

\bibitem{sundareshan1991qualitative}
M.~K. Sundareshan and R.~M. Elbanna, ``Qualitative analysis and decentralized
  controller synthesis for a class of large-scale systems with symmetrically
  interconnected subsystems,'' {\em Automatica}, vol.~27, no.~2, pp.~383--388,
  1991.

\bibitem{yang1995structural}
G.-H. Yang and S.-Y. Zhang, ``Structural properties of large-scale systems
  possessing similar structures,'' {\em Automatica}, vol.~31, no.~7,
  pp.~1011--1017, 1995.

\bibitem{hamilton2012patterned}
S.~C. Hamilton and M.~E. Broucke, ``Patterned linear systems,'' {\em
  Automatica}, vol.~48, no.~2, pp.~263--272, 2012.

\bibitem{arabneydi2015mft}
J.~Arabneydi and A.~Mahajan, ``Team-optimal solution of finite number of
  mean-field coupled {LQG} subsystems,'' in {\em Conf. Decision and Control},
  (Kyoto, Japan), Dec. 2015.

\bibitem{arabneydi2016mft}
J.~Arabneydi and A.~Mahajan, ``{Linear Quadratic Mean Field Teams: Optimal and
  Approximately Optimal Decentralized Solutions},'' 2016.
\newblock arXiv:1609.00056.

\bibitem{astrom1994adaptive}
K.~J. Astrom and B.~Wittenmark, {\em Adaptive Control}.
\newblock Addison-Wesley Longman Publishing Co., Inc., 1994.

\bibitem{bradtke1993reinforcement}
S.~J. Bradtke, ``Reinforcement learning applied to linear quadratic
  regulation,'' in {\em Neural Information Processing Systems}, pp.~295--302,
  1993.

\bibitem{bradtke1994adaptive}
S.~J. Bradtke, B.~E. Ydstie, and A.~G. Barto, ``Adaptive linear quadratic
  control using policy iteration,'' in {\em Proceedings of American Control
  Conference}, vol.~3, pp.~3475--3479, 1994.

\bibitem{campi1998adaptive}
M.~C. Campi and P.~Kumar, ``Adaptive linear quadratic {G}aussian control: the
  cost-biased approach revisited,'' {\em SIAM Journal on Control and
  Optimization}, vol.~36, no.~6, pp.~1890--1907, 1998.

\bibitem{abbasi2011regret}
Y.~Abbasi-Yadkori and C.~Szepesv{\'a}ri, ``Regret bounds for the adaptive
  control of linear quadratic systems,'' in {\em Annual Conference on Learning
  Theory}, pp.~1--26, 2011.

\bibitem{faradonbeh2017finite}
M.~K.~S. Faradonbeh, A.~Tewari, and G.~Michailidis, ``Optimism-based adaptive
  regulation of linear-quadratic systems,'' {\em {IEEE} Trans. Autom. Control},
  vol.~66, no.~4, pp.~1802--1808, 2021.

\bibitem{cohen2019learning}
A.~Cohen, T.~Koren, and Y.~Mansour, ``Learning linear-quadratic regulators
  efficiently with only $\sqrt{T} $ regret,'' in {\em International Conference
  on Machine Learning}, pp.~1300--1309, 2019.

\bibitem{abeille2020efficient}
M.~Abeille and A.~Lazaric, ``Efficient optimistic exploration in
  linear-quadratic regulators via {Lagrangian} relaxation,'' in {\em
  International Conference on Machine Learning}, pp.~23--31, 2020.

\bibitem{dean2018regret}
S.~Dean, H.~Mania, N.~Matni, B.~Recht, and S.~Tu, ``Regret bounds for robust
  adaptive control of the linear quadratic regulator,'' in {\em Neural
  Information Processing Systems}, pp.~4192--4201, 2018.

\bibitem{mania2019certainty}
H.~Mania, S.~Tu, and B.~Recht, ``Certainty equivalent control of {LQR} is
  efficient.'' arXiv:1902.07826, 2019.

\bibitem{faradonbeh2020input}
M.~K.~S. Faradonbeh, A.~Tewari, and G.~Michailidis, ``Input perturbations for
  adaptive control and learning,'' {\em Automatica}, vol.~117, p.~108950, 2020.

\bibitem{simchowitz2020naive}
M.~Simchowitz and D.~Foster, ``Naive exploration is optimal for online {LQR},''
  in {\em International Conference on Machine Learning}, pp.~8937--8948, 2020.

\bibitem{faradonbeh2020adaptive}
M.~K.~S. Faradonbeh, A.~Tewari, and G.~Michailidis, ``On adaptive
  {Linear}–{Quadratic} regulators,'' {\em Automatica}, vol.~117, p.~108982,
  July 2020.

\bibitem{ouyang2017control}
Y.~Ouyang, M.~Gagrani, and R.~Jain, ``Control of unknown linear systems with
  {T}hompson sampling,'' in {\em Allerton Conference on Communication, Control,
  and Computing}, pp.~1198--1205, 2017.

\bibitem{ouyang2019posterior}
Y.~Ouyang, M.~Gagrani, and R.~Jain, ``Posterior sampling-based reinforcement
  learning for control of unknown linear systems,'' {\em {IEEE} Trans. Autom.
  Control}, vol.~65, no.~8, pp.~3600--3607, 2020.

\bibitem{abeille2018improved}
M.~Abeille and A.~Lazaric, ``Improved regret bounds for {Thompson} sampling in
  linear quadratic control problems,'' in {\em International Conference on
  Machine Learning}, pp.~1--9, 2018.

\bibitem{gao2019networked}
S.~Gao and A.~Mahajan, ``Optimal control of network-coupled subsystems:
  {S}pectral decomposition and low-dimensional solutions,'' {\em submitted to
  IEEE Trans. Control of Networked Sys.}, 2020.

\bibitem{auer2002finite}
P.~Auer, N.~Cesa-Bianchi, and P.~Fischer, ``Finite-time analysis of the
  multiarmed bandit problem,'' {\em Machine learning}, vol.~47, no.~2-3,
  pp.~235--256, 2002.

\bibitem{agrawal2012analysis}
S.~Agrawal and N.~Goyal, ``Analysis of {T}hompson sampling for the multi-armed
  bandit problem,'' in {\em Conference on Learning Theory}, 2012.

\bibitem{wang2020distributed}
H.~Wang, S.~Lin, H.~Jafarkhani, and J.~Zhang, ``Distributed {Q}-learning with
  state tracking for multi-agent networked control,'' in {\em AAMAS}, pp.~1692
  -- 1694, 2021.

\bibitem{jing2021learning}
G.~Jing, H.~Bai, J.~George, A.~Chakrabortty, and P.~K. Sharma, ``Learning
  distributed stabilizing controllers for multi-agent systems,'' {\em IEEE
  Control Systems Letters}, 2021.

\bibitem{li2019distributed}
Y.~Li, Y.~Tang, R.~Zhang, and N.~Li, ``Distributed reinforcement learning for
  decentralized linear quadratic control: A derivative-free policy optimization
  approach,'' in {\em Proc. Conf. Learning for Dynamics and Control},
  pp.~814--814, June 2020.

\bibitem{zhang2018fully}
K.~Zhang, Z.~Yang, H.~Liu, T.~Zhang, and T.~Basar, ``Fully decentralized
  multi-agent reinforcement learning with networked agents,'' in {\em
  International Conference on Machine Learning}, pp.~5872--5881, 2018.

\bibitem{zhang2019decentralized}
K.~Zhang, Z.~Yang, and T.~Ba{\c{s}}ar, ``Decentralized multi-agent
  reinforcement learning with networked agents: Recent advances,'' {\em arXiv
  preprint arXiv:1912.03821}, 2019.

\bibitem{NIPS2014_0deb1c54}
I.~Osband and B.~Van~Roy, ``Near-optimal reinforcement learning in factored
  {MDP}s,'' in {\em Advances in Neural Information Processing Systems},
  vol.~27, Curran Associates, Inc., 2014.

\bibitem{chen2021efficient}
X.~Chen, J.~Hu, L.~Li, and L.~Wang, ``Efficient reinforcement learning in
  factored {MDPs} with application to constrained {RL},'' {\em arXiv preprint
  arXiv:2008.13319}, 2021.

\bibitem{HuangCainesMalhame_2007}
M.~Huang, P.~E. Caines, and R.~P. Malham{\'e}, ``Large-population cost-coupled
  {LQG} problems with nonuniform agents: individual-mass behavior and
  decentralized epsilon-{N}ash equilibria,'' {\em IEEE Transactions on
  Automatic Control}, vol.~52, no.~9, pp.~1560--1571, 2007.

\bibitem{LasryLions_2007}
J.-M. Lasry and P.-L. Lions, ``Mean field games,'' {\em Japanese Journal of
  Mathematics}, vol.~2, no.~1, pp.~229--260, 2007.

\bibitem{subramanian2019reinforcement}
J.~Subramanian and A.~Mahajan, ``Reinforcement learning in stationary
  mean-field games,'' in {\em International Conference on Autonomous Agents and
  Multi-Agent Systems}, pp.~251--259, 2019.

\bibitem{subramanian2020multi}
S.~G. Subramanian, P.~Poupart, M.~E. Taylor, and N.~Hegde, ``Multi type mean
  field reinforcement learning,'' in {\em International Conference on
  Autonomous Agents and Multiagent Systems}, pp.~411--419, 2020.

\bibitem{uz2020reinforcement}
M.~A. uz~Zaman, K.~Zhang, E.~Miehling, and T.~Bașar, ``Reinforcement learning
  in non-stationary discrete-time linear-quadratic mean-field games,'' in {\em
  Conference on Decision and Control}, pp.~2278--2284, 2020.

\bibitem{gagrani2020thompson}
M.~Gagrani, S.~Sudhakara, A.~Mahajan, A.~Nayyar, and Y.~Ouyang, ``Thompson
  sampling for linear quadratic mean-field teams.'' arXiv preprint
  arXiv:2011.04686, 2020.

\bibitem{sternby1977consistency}
J.~Sternby, ``On consistency for the method of least squares using martingale
  theory,'' {\em {IEEE} Trans. Autom. Control}, vol.~22, no.~3, pp.~346--352,
  1977.

\bibitem{altproof}
M.~Gagrani, S.~Sudhakara, A.~Mahajan, A.~Nayyar, and Y.~Ouyang, ``A relaxed
  technical assumption for posterior sampling-based reinforcement learning for
  control of unknown linear systems.''
  \url{http://cim.mcgill.ca/~adityam/projects/reinforcement-learning/preprint/tsde.pdf},
  2021.

\bibitem{faradonbeh2019finite}
M.~K.~S. Faradonbeh, A.~Tewari, and G.~Michailidis, ``Finite-time adaptive
  stabilization of linear systems,'' {\em {IEEE} Trans. Autom. Control},
  vol.~64, pp.~3498--3505, Aug. 2019.

\bibitem{kumar2015stochastic}
P.~R. Kumar and P.~Varaiya, {\em Stochastic systems: Estimation,
  identification, and adaptive control}.
\newblock SIAM Classics, 2015.

\bibitem{abbasiyadkori2014bayesian}
Y.~Abbasi-Yadkori and C.~Szepesvari, ``Bayesian optimal control of smoothly
  parameterized systems: The lazy posterior sampling algorithm.'' arXiv
  preprint arXiv:1406.3926, 2014.

\end{thebibliography}

\appendix
\section{Appendix: Regret Analysis}

\subsection{Preliminary Results}

Since $\breve{S}(\cdot)$ and $\breve{G}(\cdot)$ are continuous functions on a compact set $\breve{\Theta}$, there exist finite constants  $\breve{M}_J, \breve{M}_{\breve{\theta}}, \breve{M}_S, \breve{M}_G$ such that $\TR(\breve{S}(\breve{\theta})) \leq \breve{M}_J, \|\breve{\theta} \| \leq \breve{M}_{\breve{\theta}}, \|\breve{S}(\breve{\theta}) \| \leq \breve{M}_S$ and $\|[I, \breve{G}(\breve{\theta})^\TRANS]^\TRANS \| \leq \breve{M}_G$ for all $\breve{\theta} \in \breve{\Theta}$ where $\|\cdot \|$ is the induced matrix norm. 

Let $\breve X^i_T = \max_{1 \leq t \leq T} \| \breve x^i_t \|$ be the maximum
norm of the auxiliary state along the entire trajectory. The next bound follows
from \cite[Lemma 4]{altproof}.
\begin{lemma}\label{lem:Xq}
  For each node $i \in N$, any $q \ge 1$ and any $T >1$, 
  \begin{align*}
   \EXP \biggl[ \frac{(\breve{X}_T^i)^q}{(\breve{\sigma}^i)^q} \biggr] &\leq \mathcal{O} \big( \log T\big)
  \end{align*}
\end{lemma}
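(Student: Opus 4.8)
The plan is to show that, under \texttt{Net-TSDE}, the auxiliary state of each agent evolves as a uniformly exponentially stable linear system driven by i.i.d.\ Gaussian noise, and then to reduce the moment bound to \cite[Lemma~4]{altproof} after factoring out the noise scale $\breve\sigma^i$. During episode $k$ of the auxiliary actor the gain $\brMAT G(\breve\theta_k)$ is held fixed, so the closed-loop recursion for agent $i$ reads $\breve x^i_{t+1} = \breve\Lambda_k \breve x^i_t + \breve w^i_t$, where $\breve\Lambda_k = \MAT A_{\breve\theta} + \MAT B_{\breve\theta}\brMAT G(\breve\theta_k)$, with $\breve\theta$ the true parameter and $\breve\theta_k \in \breve\Theta$ the sampled one. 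Assumption (A8) gives $\rho(\breve\Lambda_k) \le \delta < 1$ uniformly over all sampled parameters. By (A4) we have $\breve w^i_t \sim \mathcal N(0,(\breve\sigma^i)^2 I)$, so the rescaled noise $\breve w^i_t/\breve\sigma^i$ is standard Gaussian and independent of the parameters; dividing the recursion by $\breve\sigma^i$ shows that $\breve x^i_t/\breve\sigma^i$ obeys the same stable recursion driven by standard Gaussian noise, with rescaled initial condition $\breve x^i_1/\breve\sigma^i$ whose moments are bounded by (A6). Consequently $\EXP[(\breve X^i_T/\breve\sigma^i)^q]$ is independent of $\breve\sigma^i$, which explains the normalization in the statement.

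Next I would establish the key ingredient: a uniform geometric bound $\|\Phi(t,s)\| \le C\rho_0^{\,t-s}$ on the state-transition product $\Phi(t,s) = \breve\Lambda_{k(t-1)}\cdots\breve\Lambda_{k(s)}$, for some $\rho_0 \in (\delta,1)$ and a finite constant $C$. Since $\breve\Theta$ is compact and $\breve\Lambda$ depends continuously on $(\breve\theta,\breve\theta_k)$, Gelfand's formula yields $\|\breve\Lambda^{\tau}\| \le C\rho_0^{\,\tau}$ for each fixed matrix, uniformly over the compact parameter set; the minimum episode length $\breve T_{\min}$ (chosen as in \cite{altproof}) then guarantees strict contraction over each episode, which upgrades this pointwise bound to a uniform bound on the switched product. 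Unrolling gives $\|\breve x^i_t\|/\breve\sigma^i \le C\rho_0^{\,t-1}\,\|\breve x^i_1\|/\breve\sigma^i + C\sum_{s<t}\rho_0^{\,t-1-s}\,\|\breve w^i_s\|/\breve\sigma^i$, so each $\|\breve x^i_t\|/\breve\sigma^i$ is a geometrically weighted sum of independent standard-Gaussian vectors and is therefore sub-Gaussian with a variance proxy bounded by $C^2/(1-\rho_0)^2$ uniformly in $t$. A maximal inequality over $t \le T$ then controls $\EXP[(\breve X^i_T/\breve\sigma^i)^q]$ by a polylogarithmic function of $T$, which is exactly the bound recorded in \cite[Lemma~4]{altproof}.

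The hardest part will be the uniform geometric decay of the \emph{time-varying} closed loop. A pointwise spectral-radius bound $\rho(\breve\Lambda_k) \le \delta < 1$ for each episode's matrix does not by itself imply exponential stability of the switched system, since the joint spectral radius of the family $\{\breve\Lambda_k\}$ could exceed one; it is precisely the minimum episode length $\breve T_{\min}$, together with the compactness of $\breve\Theta$ and continuity of $\brMAT G(\cdot)$, that rescues the argument by enforcing contraction within each episode. Because this is exactly the regime analyzed in \cite{altproof}, the cleanest route is to verify that the auxiliary dynamics meet the hypotheses of \cite[Lemma~4]{altproof}---uniform stability from (A8), i.i.d.\ Gaussian noise with covariance $(\breve\sigma^i)^2 I$ from (A4), and bounded initial moments from (A6)---and then invoke that lemma directly rather than reproving the maximal inequality.
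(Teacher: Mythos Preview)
Your proposal is correct and matches the paper's approach: the paper's entire proof is the one-line citation ``follows from \cite[Lemma~4]{altproof},'' and you arrive at the same invocation after carefully verifying its hypotheses (uniform stability via (A8) and the minimum episode length, i.i.d.\ Gaussian noise via (A4), and the $\breve\sigma^i$ rescaling). One small correction: the bounded moments of the rescaled initial condition $\breve x^i_1/\breve\sigma^i$ come from the Gaussian initial-state assumption in the system model, not from (A6), which is the prior on $\breve\theta$; this does not affect the argument since the initial contribution is a $T$-independent constant absorbed into the $\mathcal{O}(\cdot)$.
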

The following lemma gives an upper bound on the number of episodes $\breve K_T$. 

\begin{lemma}\label{lem:Xqlog}
For any $q\geq 1$, we have
\begin{align}
  \EXP \biggl[ \frac{(\breve{X}_T^i)^q}{(\breve \sigma^i)^q} \log\biggl(\sum_{t=1}^T (\breve{X}_T^{j_t})^2 \biggr) \biggr] 
  &\le \EXP \biggl[ \frac{(\breve{X}_T^i)^q}{(\breve \sigma^i)^q} 
  \log\biggl(\sum_{t=1}^T \sum_{i \in N} (\breve{X}_T^{i})^2 \biggr) \biggr] 
  \notag \\
  &\leq \tilde{\mathcal{O}}(1)
\end{align}
\end{lemma}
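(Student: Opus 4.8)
The plan is to first dispose of the easy inequality and the $\log T$ factor, and then reduce everything to controlling a single expectation in which the genuine difficulty---the coupling across agents---lives. The first inequality is immediate: for every~$t$ the selected index $j_t$ lies in~$N$, so $(\breve X_T^{j_t})^2 \le \sum_{i'\in N}(\breve X_T^{i'})^2$; summing over~$t$ and using monotonicity of $\log$ together with nonnegativity of the prefactor $(\breve X_T^i)^q/(\breve\sigma^i)^q$ gives it. For the right-hand side, the inner summand does not depend on~$t$, so $\sum_{t=1}^T\sum_{i'\in N}(\breve X_T^{i'})^2 = T\sum_{i'\in N}(\breve X_T^{i'})^2$ and the logarithm splits as $\log T + \log\bigl(\sum_{i'\in N}(\breve X_T^{i'})^2\bigr)$. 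Writing $Y^{i'} \coloneqq \breve X_T^{i'}/\breve\sigma^{i'}$ and using $(\breve\sigma^{i'})^2\le\sigma_w^2$ (from \eqref{eq:sigma}), the argument of the remaining log is at most $\sigma_w^2\Sigma$ with $\Sigma\coloneqq\sum_{i'\in N}(Y^{i'})^2$. The constant pieces $\log T$ and $\log\sigma_w^2$ multiply $\EXP[(Y^i)^q]$, which is $\mathcal O(\log T)$ by Lemma~\ref{lem:Xq}, so they already contribute $\tildeO(1)$. Everything therefore reduces to showing $\EXP[(Y^i)^q\log\Sigma]=\tildeO(1)$.

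The hard part is that $(Y^i)^q$ and $\Sigma$ are strongly correlated (the auxiliary noises $\breve w^{i'}_t$ are linear images of the common noise $w_t$ under the projection onto the complement of the eigenvectors, hence dependent across $i'$), and that the bound must be genuinely polylogarithmic in $n$, since a single factor of~$n$ or even $\sqrt n$ is not absorbed by $\tildeO(\cdot)$. Crude estimates such as $\log x\le x$ applied to~$\Sigma$, or $\log\Sigma\le\log n+\log\max_{i'}(Y^{i'})^2$ followed by Cauchy--Schwarz, all leak a polynomial factor of~$n$, because the $n$ summands of $\Sigma$ each have second moment of order $\log T$. The device that removes this factor is to center the logarithm at the mean scale: set $m\coloneqq\EXP[\Sigma]$ and write $\log\Sigma = \log m + \log(\Sigma/m)$, so that
\begin{equation*}
  \EXP[(Y^i)^q\log\Sigma] = \log(m)\,\EXP[(Y^i)^q] + \EXP[(Y^i)^q\log(\Sigma/m)].
\end{equation*}

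To control the first term I would bound $m$ from both sides. Lemma~\ref{lem:Xq} gives $\EXP[(Y^{i'})^2]=\mathcal O(\log T)$ uniformly in~$i'$, hence $m\le\mathcal O(n\log T)$; for the lower bound, using the dynamics \eqref{eq:rel-bar} at $t=1$ and the independence of $\breve w^{i'}_1$ from $(\breve\theta,\breve z^{i'}_1)$, one gets $\EXP[\|\breve x^{i'}_2\|^2]\ge\EXP[\|\breve w^{i'}_1\|^2]=d_x(\breve\sigma^{i'})^2$, so $\EXP[(Y^{i'})^2]\ge d_x\ge 1$ and therefore $m\ge n$. Consequently $\log m=\mathcal O(\log n+\log\log T)$ and the first term is $\tildeO(1)$. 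For the second term I would use $\log x\le x$ and nonnegativity of $(Y^i)^q$ to get $(Y^i)^q\log(\Sigma/m)\le (Y^i)^q\Sigma/m$, whence
\begin{equation*}
  \EXP[(Y^i)^q\log(\Sigma/m)] \le \frac1m\sum_{i'\in N}\EXP\bigl[(Y^i)^q(Y^{i'})^2\bigr].
\end{equation*}
By Cauchy--Schwarz and Lemma~\ref{lem:Xq} each summand is $\mathcal O(\log T)$ uniformly, so the numerator is $n\cdot\mathcal O(\log T)$; dividing by $m\ge n$ leaves $\mathcal O(\log T)=\tildeO(1)$. Combining the two terms proves $\EXP[(Y^i)^q\log\Sigma]=\tildeO(1)$, and hence the lemma. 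The one point demanding care is the uniformity in~$i'$ of the constants in Lemma~\ref{lem:Xq} together with the lower bound $m\ge n$, since it is precisely the cancellation of the $n$ in the numerator against the $n$ in $m$ that produces a network-size-free bound.
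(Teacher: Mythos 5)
Your proof is correct, but it takes a genuinely different route from the paper. The paper disposes of this lemma in one line, by reference to the same argument as \cite[Lemma~5]{altproof}; the technique in that line of work---which this paper itself replays when bounding $\breve R^i_1(T)$ and $\EXP[\breve K_T]$---is to apply Cauchy--Schwarz to decouple the polynomial prefactor $(\breve X^i_T/\breve\sigma^i)^q$ from the logarithm, and then Jensen's inequality to pull the expectation inside the concave logarithm, so that the sum of $nT$ terms becomes $\log\bigl(\EXP\bigl[\sum_t\sum_{i'}(\breve X^{i'}_T)^2/(\breve\sigma^{i'})^2\bigr]\bigr) \le \log\bigl(nT\cdot\mathcal{O}(\log T)\bigr) = \tildeO(1)$, with no lower bound on any expectation needed. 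You instead center the logarithm at $m=\EXP[\Sigma]$, use $\log x\le x$ on the fluctuation $\Sigma/m$, and cancel the resulting factor of $n$ in the numerator against a lower bound $m\ge n$ extracted from the noise floor of the dynamics at $t=1$. Both arguments are sound and both correctly avoid the naive estimates that leak polynomial factors of $n$. What the paper's route buys is brevity and robustness: it needs nothing beyond Lemma~\ref{lem:Xq} and concavity (the only technicality being that when the log is not under a square root one must handle $\log^2$, e.g.\ via concavity of $\log^2(e+x)$ on $x\ge0$). What your route buys is a fully elementary, self-contained argument that makes explicit \emph{where} the network size cancels; the price is reliance on extra model facts---(A4), (A5), independence and zero mean of $\breve w^{i'}_1$, $T\ge2$, and uniformity over $i'$ of the constants in Lemma~\ref{lem:Xq}---which you correctly flag as the delicate points. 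One cosmetic remark: your lower bound $m\ge n$ via $\EXP[\|\breve x^{i'}_2\|^2]\ge d_x(\breve\sigma^{i'})^2$ is valid, but had it failed, the paper's Jensen route would still go through, which is a sense in which your argument is less economical than necessary.
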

This can be proved along the same lines as \cite[Lemma~5]{altproof}.

\begin{lemma}\label{lem:episodes}
The number of episodes $\breve{K}_T$ is bounded as follows:
\begin{align*}
\breve{K}_T &\leq  \mathcal{O} \left( \sqrt{ (d_x+d_u) T\log \left(   
\sum_{t=1}^{T-1} \frac{(\breve{X}_T^{j_t})^2 }{(\breve{\sigma}^{j_t})^2}\right)} \right)
\end{align*}
\end{lemma}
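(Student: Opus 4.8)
The plan is to split the episodes of the auxiliary actor according to which of the two termination criteria triggered them and to bound the two counts separately. Write $\breve K_T = K_1 + K_2$, where $K_2$ is the number of episodes terminated by the determinant criterion $\det \breve\Sigma_t < \tfrac12 \det\breve\Sigma_{\breve t_k}$ and $K_1$ is the number terminated by the length criterion $t - \breve t_k > \breve T_{k-1}$. I would control $\breve K_T$ by a purely combinatorial (``macro'') argument in terms of $K_2$, and then control $K_2$ itself by a determinant (``micro'') argument of the type standard in the analysis of least-squares and Thompson-sampling schemes.

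For the macro step, I would use the fact that the length criterion forces $\breve T_k \le \breve T_{k-1}+1$ for every episode $k$: an episode can never exceed its predecessor by more than one step. Hence, inside any maximal run of consecutive length-triggered episodes the lengths increase by one, and since each length is at least $\breve T_{\min}+1 \ge 1$, a run of $m$ such episodes contributes at least $1+2+\cdots+m \ge m^2/2$ to $\sum_k \breve T_k$. There are at most $K_2+1$ such runs, as the $K_2$ determinant-triggered episodes delimit them; writing $m_j$ for the number of length-triggered episodes in run $j$ and using $\sum_k \breve T_k = T$ together with Cauchy--Schwarz across the runs gives $T \ge \tfrac12 \sum_j m_j^2 \ge K_1^2/\big(2(K_2+1)\big)$, where $K_1 = \sum_j m_j$. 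Thus $K_1 \le \sqrt{2T(K_2+1)}$ and $\breve K_T = K_1 + K_2 \le \sqrt{2T(K_2+1)} + K_2$.

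For the micro step, I would first note that the rank-one updates in~\eqref{eq:sigma_breve_update} give $(\breve\Sigma_{t+1})^{-1} \succeq (\breve\Sigma_t)^{-1}$, so $\det\breve\Sigma_t$ is non-increasing in $t$ \emph{regardless} of which agent $j_t$ is selected. Each determinant-triggered episode more than halves the determinant at episode starts, so $K_2 \le \log_2\big(\det\breve\Sigma_1/\det\breve\Sigma_T\big)$. Writing $(\breve\Sigma_T)^{-1} = (\breve\Sigma_1)^{-1} + \sum_{t=1}^{T-1}(\breve\sigma^{j_t})^{-2}\breve z^{j_t}_t(\breve z^{j_t}_t)^\TRANS$ and applying the AM--GM inequality to the eigenvalues of $(\breve\Sigma_T)^{-1}$ bounds $\log(\det\breve\Sigma_1/\det\breve\Sigma_T)$ by $\mathcal O\big((d_x+d_u)\log(\sum_t \|\breve z^{j_t}_t\|^2/(\breve\sigma^{j_t})^2)\big)$. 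Since $\breve u^i_t = \brMAT G(\breve\theta_k)\breve x^i_t$, the constant $\breve M_G$ from the preliminary results gives $\|\breve z^{j_t}_t\|^2 \le \breve M_G^2 (\breve X^{j_t}_T)^2$, and therefore $K_2 \le \mathcal O\big((d_x+d_u)\log(\sum_{t=1}^{T-1}(\breve X^{j_t}_T)^2/(\breve\sigma^{j_t})^2)\big)$.

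Substituting this bound on $K_2$ into $\breve K_T \le \sqrt{2T(K_2+1)} + K_2$ yields the claim, the additive $K_2$ being lower order than $\sqrt{(d_x+d_u)T\log(\cdots)}$ and hence absorbed into the $\mathcal O$. I expect the main obstacle to be the macro combinatorial step: one must verify carefully that the lengths increase by exactly one within each run, set up the grouping so that Cauchy--Schwarz applies with exactly $K_2+1$ groups, and account for the minimum-length offset $\breve T_{\min}$ and the truncated final episode (for which $\breve t_k$ is set to $T+1$). The determinant monotonicity under the time-varying selection rule $j_t$ is the other point that must be checked, but it follows immediately from the rank-one structure of the update.
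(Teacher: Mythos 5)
Your proposal is correct and follows essentially the same route as the paper's proof, which invokes Lemma~5 of~\cite{altproof} for exactly the two ingredients you develop: the combinatorial (macro) bound $\breve K_T \le \sqrt{2\breve\eta T}$ coming from episode lengths growing by one between determinant triggers plus Cauchy--Schwarz over the runs, and the determinant-doubling (micro) bound $\det(\breve\Sigma_T^{-1}) \ge 2^{\breve\eta-1}\det(\breve\Sigma_1^{-1})$ combined with the trace--determinant (AM--GM) inequality and $\|\breve z_t^{j_t}\| \le \breve M_G \breve X_T^{j_t}$. Your only deviation is cosmetic: you carry the determinant-triggered count $K_2$ as a separate additive term and absorb it at the end, whereas the paper's macro inequality already counts all episodes inside the macro-episodes delimited by determinant triggers.
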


\begin{proof}
We can follow the same argument as in the proof of Lemma 5 in \cite{altproof}. Let $\breve{\eta} -1$ be the number of times the second stopping criterion is triggered for $\breve{p}_t$. Using the analysis in the proof of Lemma 5 in \cite{altproof}, we can get the following  inequalities:
\begin{align}
  \breve{K}_T &\leq \sqrt{2 \breve{\eta} T},
  \label{eq:macro_ep_bound}
  \\
\det(\breve{\Sigma}^{-1}_{T}) &\geq 2^{\breve{\eta}-1}\det(\breve{\Sigma}^{-1}_1) \ge
2^{\breve{\eta}-1} \breve \lambda_{\min}^d,
  \label{eq:macro_ep_bound2}
\end{align}
where $d = d_x + d_u$ and $\breve \lambda_{\min}$ is the minimum eigenvalue of $\breve \Sigma_1^{-1}$.

Combining~\eqref{eq:macro_ep_bound2} with $\TR( \breve{\Sigma}_T^{-1} )/d \ge
\det(\breve \Sigma_T^{-1})^{1/d}$, we get 
\(
  \TR(\breve \Sigma_T^{-1}) \ge d \breve \lambda_{\min} 2^{(\breve \eta -1)/d}.
\)
Thus,
\begin{equation}
  \label{eq:macro_ep_bound3}
  \eta \le 1 + \frac{d}{\log 2} 
  \log \biggl( \frac{ \TR(\breve \Sigma_T^{-1}) }{ d \breve \lambda_{\min} } \biggr).
\end{equation}

Now, we bound $\TR(\breve \Sigma_T^{-1})$. From \eqref{eq:sigma_breve_update},
we have
\begin{equation}
  \label{eq:macro_ep_bound4}
  \TR (\breve{\Sigma}_{T}^{-1} ) =
  \TR(\breve{\Sigma}_1^{-1}) +  \sum_{t=1}^{T-1} \frac{1}{(\breve{\sigma}^{j_t})^2} 
\underbrace{\TR(\breve{z}_t^{j_t}  (\breve{z}_t^{j_t})^\TRANS}_{= \| \breve z_t^{j_t} \|^2} ). 
\end{equation}
Note that $\|\breve{z}_t^{j_t}\| = \|[I, \breve{G}(\breve{\theta})^\TRANS]^\TRANS \breve{x}_t^{j_t}\| \leq \breve{M}_G \|\breve{x}_t^{j_t}\| \leq \breve{M}_G \breve{X}_T^{j_t}$. Using $\| \breve z_t^{j_t} \|^2 \le \breve M_G^2 (\breve X_T^{j_t})^2$ in~\eqref{eq:macro_ep_bound4} and substituting the resulting bound on $\TR(\breve \Sigma_T^{-1})$ in~\eqref{eq:macro_ep_bound3} and then combining it with the bound on~$\eta$ in~\eqref{eq:macro_ep_bound}, gives the result of the lemma.
\end{proof}

\begin{lemma}\label{lem:episodes_exp}
The expected value of  $\breve{K}_T$ is bounded as follows:
\begin{align*}
\EXP[\breve{K}_T] &\leq  \tilde{\mathcal{O}}\left( \sqrt{(d_x+d_u) T }  \right)
\end{align*}
\end{lemma}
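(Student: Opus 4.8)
The plan is to take expectations on both sides of the pathwise bound already established in Lemma~\ref{lem:episodes} and dispose of the random logarithmic term using concavity. Write $\breve S_T \coloneqq \sum_{t=1}^{T-1} (\breve{X}_T^{j_t})^2/(\breve{\sigma}^{j_t})^2$ for the random quantity appearing inside the logarithm, so that Lemma~\ref{lem:episodes} reads $\breve{K}_T \le \mathcal{O}\bigl(\sqrt{(d_x+d_u)\,T\log \breve S_T}\bigr)$. Since $x \mapsto \sqrt{x}$ is concave, Jensen's inequality lets me pull the expectation inside the square root, giving
\[
  \EXP[\breve{K}_T] \le \mathcal{O}\Bigl( \sqrt{(d_x+d_u)\,T\, \EXP[\log \breve S_T]} \Bigr).
\]
It therefore suffices to show $\EXP[\log \breve S_T] = \tilde{\mathcal{O}}(1)$.

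To bound $\EXP[\log \breve S_T]$ I would apply Jensen's inequality a second time, now using the concavity of $\log$, to get $\EXP[\log \breve S_T] \le \log \EXP[\breve S_T]$. The adaptive agent-selection rule for $j_t$ is handled by the pathwise domination $(\breve{X}_T^{j_t})^2/(\breve{\sigma}^{j_t})^2 \le \sum_{i \in N} (\breve{X}_T^i)^2/(\breve{\sigma}^i)^2$, which holds for every realization regardless of how $j_t$ is chosen, since $j_t$ is always one of the agents. Summing over $t$ and taking expectations,
\[
  \EXP[\breve S_T] \le (T-1)\sum_{i \in N} \EXP\Bigl[ \frac{(\breve{X}_T^i)^2}{(\breve{\sigma}^i)^2} \Bigr] \le (T-1)\, n\, \mathcal{O}(\log T),
\]
where the last step invokes Lemma~\ref{lem:Xq} with $q = 2$. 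Hence $\log \EXP[\breve S_T] \le \log\bigl( n\,(T-1)\,\mathcal{O}(\log T) \bigr) = \mathcal{O}(\log n + \log T + \log\log T) = \tilde{\mathcal{O}}(1)$, since the $\tilde{\mathcal{O}}(\cdot)$ notation absorbs logarithmic factors in $n$ and $T$.

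Combining the two displays gives $\EXP[\breve{K}_T] \le \tilde{\mathcal{O}}\bigl(\sqrt{(d_x+d_u)\,T}\bigr)$, as claimed. The only point requiring care is the interchange of expectation with the nonlinear $\sqrt{\log(\cdot)}$ together with the adaptivity of $j_t$; the double use of Jensen's inequality, combined with the crude but valid domination of the selected-agent term by the sum over all agents, resolves both issues and reduces everything to the single-agent moment estimate of Lemma~\ref{lem:Xq}. Alternatively, one could feed the trajectory-wise bound of Lemma~\ref{lem:Xqlog} directly into the expectation, but the route above is cleaner because it avoids the spurious weighting factor $(\breve{X}_T^i)^q/(\breve{\sigma}^i)^q$ present in that lemma.
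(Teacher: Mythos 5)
Your proposal is correct and follows essentially the same route as the paper's proof: take expectations of the pathwise bound in Lemma~\ref{lem:episodes}, use Jensen's inequality to pull the expectation inside the logarithm, dominate the adaptively selected term $(\breve{X}_T^{j_t})^2/(\breve{\sigma}^{j_t})^2$ by the sum over all agents, and invoke Lemma~\ref{lem:Xq}. The only difference is cosmetic: you apply Jensen twice (once for $\sqrt{\cdot}$, once for $\log$), whereas the paper applies it in a single step to the composite $\sqrt{\log(\cdot)}$; your split version is, if anything, slightly more careful since each application uses a standard concave function on its full domain.
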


\begin{proof}
  From Lemma~\ref{lem:episodes}, we get
  \begin{align*}
    \EXP[ \breve K_T ] &\le 
    \mathcal{O}\Biggl( \EXP\Biggl[
\sqrt{ (d_x+d_u) T\log \biggl(   
\sum_{t=1}^{T-1} \frac{(\breve{X}_T^{j_t})^2 }{(\breve{\sigma}^{j_t})^2}\biggr)} 
    \Biggr] \Biggr)
    \\
    &\stackrel{(a)}\le
  \mathcal{O}\biggl(\sqrt{ (d_x+d_u) T\log \biggl(
    \EXP \biggl[
\sum_{t=1}^{T-1} \frac{(\breve{X}_T^{j_t})^2 }{(\breve{\sigma}^{j_t})^2} \biggr]\biggr)} \biggr) 
  \\
  &\le
  \mathcal{O}\biggl(\sqrt{ (d_x+d_u) T\log \biggl(
    \EXP \biggl[
\sum_{t=1}^{T-1} \sum_{i \in N} \frac{(\breve{X}_T^{i})^2 }{(\breve{\sigma}^{i})^2} \biggr]\biggr)} \biggr) 
  \\
  &\stackrel{(b)}{\le} \tildeO( \sqrt{(d_x + d_u) T })
  \end{align*}
where $(a)$ follows from  Jensen's inequality and $(b)$ follows from
Lemma~\ref{lem:Xq}. 
\end{proof}

\subsection{Proof of Lemma \ref{lem:breve_regret_terms}}

\begin{proof}

We will prove each part separately. 

1) Bounding $\breve{R}^i_0(T)$: From an argument similar to the proof of Lemma~5 of~\cite{ouyang2019posterior}, we get that 
  \(
    \breve R^i_0(T) \le (\breve \sigma^i)^2 \breve M_J \EXP[ \breve K_T ].
  \)
The result then follows from substituting the bound on $\EXP[ \breve K_T ]$ from Lemma~\ref{lem:episodes_exp}.

2) Bounding $\breve{R}^i_1(T)$:  
\begin{align}
  \breve{R}^i_1(T)
  &= \EXP\biggl[\sum_{k=1}^{\breve K_T}\sum_{t=\breve t_k}^{\breve t_{k+1}-1}  \Big[ (\breve x_t^i)^\TRANS \breve S_k \breve x_t^i - (\breve x_{t+1}^i)^\TRANS \breve S_k \breve x_{t+1}^i\Big]\biggr] \notag \\
  &=\EXP\biggl[\sum_{k=1}^{\breve K_T}\Big[ (\breve x^i_{\breve t_k})^\TRANS \breve S_k \breve x_{\breve t_k}^i -(\breve x_{\breve t_{k+1}}^i )^\TRANS \breve S_k \breve x_{\breve t_{k+1}}^i \Big]\biggr] \notag  \\
  &\leq \EXP\biggl[\sum_{k=1}^{\breve K_T}
  (\breve x^i_{\breve t_k})^\TRANS \breve S_k \breve x_{\breve t_k}^i \biggr]
  \leq \EXP\biggl[\sum_{k=1}^{\breve K_T}
  \| \breve S_k \| \|\breve x_{t_k}^i \|^2 \biggr]
  \notag \\ &
  \leq \breve M_S \EXP[ \breve K_T (\breve X_T^i)^2 ]
  \label{eq:R1bound}
\end{align}
where the last inequality follows from $\| \breve S_k \| \le \breve M_S$. 
Using the bound for $\breve{K}_T$ in Lemma~\ref{lem:episodes}, we get
 \begin{equation}
 \breve{R}^i_1(T) \leq 
 \mathcal{O}\Biggl( \sqrt{ (d_x+d_u) T }\,
   \EXP\Biggl[(\breve X^i_T)^2 
  \sqrt{ \log \biggl(   
\sum_{t=1}^{T-1} \frac{(\breve{X}_T^{j_t})^2 }{(\breve{\sigma}^{j_t})^2}\biggr)} 
    \Biggr] \Biggr).
 \end{equation}
 Now, consider the term
 \begin{align}
   \hskip 1em & \hskip -1em 
   \EXP\Biggl[(\breve X^i_T)^2 
  \sqrt{ \log \biggl(   
\sum_{t=1}^{T-1} \frac{(\breve{X}_T^{j_t})^2 }{(\breve{\sigma}^{j_t})^2}\biggr)} 
    \Biggr] \Biggr)
    \notag \\
    &\stackrel{(a)}\le \sqrt{ \EXP[ (\breve X^i_T)^4 ] \;
    \EXP\biggl[ \log \biggl( 
\sum_{t=1}^{T-1} \frac{(\breve{X}_T^{j_t})^2 }{(\breve{\sigma}^{j_t})^2}\biggr)\biggr]} 
    \notag \\
    &\stackrel{(b)}\le \sqrt{ \EXP[ (\breve X^i_T)^4 ] \;
    \log \biggl( \EXP\biggl[ 
  \sum_{t=1}^{T-1} \sum_{i \in N} \frac{(\breve{X}_T^{i})^2 }{(\breve{\sigma}^{j_t})^2}\biggr]\biggr)} 
  \notag \\
  &\stackrel{(c)}\le \tildeO( (\breve \sigma^i)^2 ),
  \label{eq:R1bound2}
  \end{align}
   where $(a)$ follows from Cauchy-Schwarz,  $(b)$ follows from Jensen's inequality and $(c)$ follows from Lemma~\ref{lem:Xq}. The result then follows from substituting~\eqref{eq:R1bound2} in~\eqref{eq:R1bound}.

3) Bounding $\breve{R}^i_2 (T)$: As in~\cite{ouyang2019posterior}, we can bound the inner summand in $\breve{R}^i_2(T)$
as
\begin{multline*}
  \bigl[(\breve \theta^\TRANS \breve z^i_t)^\TRANS \brMAT S_k (\breve\theta^\TRANS \breve z^i_t) -( \breve \theta_k^\TRANS \breve z^i_t)^\TRANS \brMAT S_k ((\breve
 \theta_k)^\TRANS \breve z^i_t) \bigr] \\
 \le
  \mathcal{O}( \breve X_T^i \|(\breve\theta- \breve\theta_k )^\TRANS \breve z_t^i \|).
\end{multline*}

Therefore, 
\begin{equation}
  \breve R^i_2(T) \leq \mathcal{O}\biggl( \EXP \Big[ \breve X_T^i \sum_{k=1}^{\breve K_T}\sum_{t=\breve t_k}^{\breve t_{k+1}-1}  \| (\breve\theta- \breve\theta_k )^\TRANS \breve z_t^i \| \Big]\biggr).
\label{eq:boundR2_1_breve}
\end{equation}
The term inside $\mathcal{O}(\cdot)$  can be written as
\begin{align}
&\EXP\bigg[\breve X_T^i \sum_{k=1}^{\breve K_T}\sum_{t=\breve t_k}^{\breve t_{k+1}-1}  \|(\breve\theta- \breve\theta_k )^\TRANS \breve z_t^i \| \bigg] \notag\\
&\quad = \EXP\bigg[\breve X_T^i \sum_{k=1}^{\breve K_T}\sum_{t=\breve t_k}^{\breve t_{k+1}-1}  \| (\breve\Sigma^{-0.5}_{\red{t_k}}(\breve\theta-\breve\theta_k ))^\TRANS  \breve\Sigma^{0.5}_{\red{t_k}} \breve z_t^i \| \bigg]
\displaybreak[1]
\notag\\
&\quad\leq \EXP\bigg[\sum_{k=1}^{\breve K_T}\sum_{t=\breve t_k}^{\breve t_{k+1}-1}  \| \breve\Sigma^{-0.5}_{\red{t_k}} (\breve\theta- \breve\theta_k )\| \times  \breve X_T^i \|\breve\Sigma^{0.5}_{\red{t_k}} \breve z_t^i \| \bigg]
\displaybreak[1]
\notag\\
& \quad \leq  \sqrt{\EXP\bigg[\sum_{k=1}^{\breve K_T}\sum_{t=\breve t_k}^{\breve t_{k+1}-1}  \| \breve\Sigma^{-0.5}_{\red{t_k}} (\breve\theta- \breve\theta_k ) \|^2 \bigg]}  \notag\\
&\qquad \times \sqrt{\EXP\bigg[\sum_{k=1}^{\breve K_T}\sum_{t=\breve t_k}^{\breve t_{k+1}-1}  (\breve X_T^i)^2 \| \breve\Sigma^{0.5}_{\red{t_k}} \breve z_t^i \|^2\bigg]}
\label{eq:boundR2_2_breve}
\end{align}
where the last inequality follows from Cauchy-Schwarz inequality. 

Following the same argument as \cite[Lemma 7]{altproof}, the first part of \eqref{eq:boundR2_2_breve} is bounded by
\begin{equation}
  \EXP\bigg[\sum_{k=1}^{\breve K_T}\sum_{t=\breve t_k}^{\breve t_{k+1}-1}  \| \breve\Sigma^{-0.5}_{\red{t_k}} (\breve\theta-\breve\theta_k ) \|^2\bigg]
  \leq \mathcal{O}(d_x(d_x+d_u)T).
\label{eq:boundR2_part1_breve}
\end{equation}

For the second part of the bound in \eqref{eq:boundR2_2_breve}, we follow the same argument as \cite[Lemma 8]{altproof}. 
Recall that $\breve \lambda_{\min}$ is the smallest eigenvalue of $\breve \Sigma_1^{-1}$. Therefore, by~\eqref{eq:sigma_breve_update}, all eigenvalues of $\breve \Sigma_t^{-1}$ are no smaller than $\breve \lambda_{\min}$. Or, equivalently, all eigenvalues of $\breve \Sigma_t$ are no larger than $1/\breve \lambda_{\min}$. 

\red{Using~\cite[Lemma~11]{abbasi2011regret}, we can show that for any $t \in \{t_k, \dots, t_{k+1} - 1\}$, 
\begin{align}
\|\breve{\Sigma}^{0.5}_{t_k} \breve{z}_t^i \|^2
 &=
 (\breve z^i_t)^\TRANS \breve \Sigma_{t_k} \breve z^i_t 
 \le 
 \frac{ \det \breve \Sigma_{t}^{-1} }{\det\breve \Sigma_{t_k}^{-1}}
 (\breve z^i_t)^\TRANS \breve \Sigma_{t} \breve z^i_t 
 \notag \\
 &\le F_1(\breve X^i_T)\,
 (\breve z^i_t)^\TRANS \breve \Sigma_{t} \breve z^i_t 
 \label{eq:new-1}
\end{align}
where $F_1(\breve X^i_T) = 
 \bigl( 1 + ( \breve M_G^2 
 (\breve X^i_T)^2/{\breve \lambda_{\min} \breve \sigma_w^2 }) \bigr)^{\breve T_{\min} \vee 1}$ and 
the last inequality follows from~\cite[Lemma 10]{altproof}.}

Moreover, since all eigenvalues of $\breve \Sigma_t$ are no larger than $1/\breve \lambda_{\min}$, we have
\(
   (\breve z^i_t)^\TRANS \breve \Sigma_t \breve z^i_t 
   \le  \| \breve z^i_t \|^2 /{\breve \lambda_{\min}} 
   \le  \breve M_G^2 (\breve X^i_T)^2/{\breve \lambda_{\min}}
 \). Therefore,
\begin{align}
  \hskip 2em & \hskip -2em 
  (\breve z^i_t)^\TRANS \breve \Sigma_t \breve z^i_t 
  \le
  \biggl((\breve \sigma^i)^2 \vee \frac{\breve M_G^2 (\breve X^i_T)^2 }{ \breve \lambda_{\min}} \biggr)
  \biggl(1 \wedge \frac{(\breve z^i_t)^\TRANS \breve \Sigma_t \breve z^i_t}{(\breve \sigma^i)^2}\biggr)
  \notag\\
  &\le \biggl((\breve \sigma^i)^2 + \frac{\breve M_G^2 (\breve X^i_T)^2 }{ \breve \lambda_{\min}} \biggr)
  \biggl(1 \wedge
  \frac{(\breve z^{j_t}_t)^\TRANS \breve \Sigma_t \breve z^{j_t}_t}{(\breve \sigma^{j_t})^2}
  \biggr),
  \label{eq:new-2}
\end{align}
where the last inequality follows from the definition of~$j_t$. \red{Let $F_2(\breve X^i_T) = 
\bigl((\breve \sigma^i)^2 + ({ \breve \lambda_{\min}}/{\breve M_G^2 (\breve X^i_T)^2 }) \bigr)$. Then,
\begin{align}
  \hskip 1em & \hskip -1em 
  \sum_{t=1}^T (\breve z^i_t)^\TRANS \breve \Sigma_t \breve z^i_t 
  \le F_2(\breve X^i_T) \sum_{t=1}^T
  \biggl(1 \wedge
  \frac{(\breve z^{j_t}_t)^\TRANS \breve \Sigma_t \breve z^{j_t}_t}{(\breve \sigma^{j_t})^2}
  \biggr)
  \notag \\
  &= F_2(\breve X^i_T)
\sum_{t=1}^T \biggl(1 \wedge \biggl\lVert\frac{\Sigma^{0.5}_t\breve z^{j_t}_t(\breve z^{j_t}_t)^\TRANS \Sigma^{0.5}_t}{(\breve \sigma^{j_t})^2}\biggr\rVert \biggr) \notag \\
  &\stackrel{(a)}\le F_2(\breve X^i_T)
  \biggl[ 2d \log \bigg( \frac{ \TR( \breve \Sigma_{T+1}^{-1 }) }{ d } \biggr) - \log\det \Sigma^{-1}_1\biggr]
  \notag\\
  &\stackrel{(b)}\le F_2(\breve X^i_T) \biggl[
   2 d \log\biggl(\frac 1{d} \biggl(
    \TR(\breve \Sigma_1^{-1}) + \breve M_G \sum_{t=1}^T \frac{(\breve X^{j_t}_T)^2}{ (\breve \sigma^{j_t})^2} \biggr)
    \biggr) 
    \notag \\
  & \hskip 6em
  - \log \det \Sigma^{-1}_1 \biggr]
    \label{eq:new-3}
\end{align}
where $d = d_x + d_u$ and $(a)$ follows from~\eqref{eq:sigma_breve_update} and
the intermediate step in the proof of~\cite[Lemma 6]{abbasiyadkori2014bayesian}.
and $(b)$ follows from~\eqref{eq:macro_ep_bound4} and the subsequent discussion.}

\red{Using~\eqref{eq:new-1} and~\eqref{eq:new-3}, we can bound the second term of~\eqref{eq:boundR2_2_breve} as follows
\begin{align}
  &
  \EXP\Big[\sum_{t=1}^T (\breve{X}_T^i)^2\|\breve{\Sigma}^{0.5}_{t_k} \breve{z}_t^i \|^2\Big] 
  \le \mathcal{O} \biggl(d \, \EXP \biggl[ 
      F_1(\breve X^i_t) F_2(\breve X^i_T) (\breve X^i_T)^2
  \notag \\
  & \hskip 12em \times 
\log \biggl(   
\sum_{t=1}^{T} (\breve{X}_T^{j_t})^2 \biggr) 
 \biggr] \biggr)
  \notag \\
  &\le \mathcal{O}\biggl(d (\breve \sigma^i)^4 \EXP\biggl[ 
F_1(\breve X^i_T) \frac{F_2(\breve X^i_T)}{(\breve \sigma^i)^2}
 \frac{(\breve X^i_T)^2}{(\breve \sigma^i)^2}
\log \biggl(   
\sum_{t=1}^{T} (\breve{X}_T^{j_t})^2 \biggr) 
 \biggr]\biggr)
  \notag \\
  &\le \tildeO(d (\breve \sigma^i)^4)
  \label{eq:new-4}
\end{align}
where the last inequality follows by observing that 
$F_1(\breve X^i_T) \frac{F_2(\breve X^i_T)}{(\breve \sigma^i)^2}
 \frac{(\breve X^i_T)^2}{(\breve \sigma^i)^2}
\log \bigl(   
\sum_{t=1}^{T} (\breve{X}_T^{j_t})^2 \bigr)$ 
 is a polynomial in $\breve X^i_T/\breve \sigma^i$ multiplied by $\log(\sum_{t=1}^T (X^{j_i}_T)^2)$ and, using Lemma~\ref{lem:Xqlog}.}

The result then follows by substituting \eqref{eq:boundR2_part1_breve} and \eqref{eq:new-4} in~\eqref{eq:boundR2_2_breve}.

\end{proof}

\vfill
\begin{IEEEbiography}{Sagar Sudhakara} received M.Tech Degree in the area of Communication and Signal Processing from Indian Institute of Technology Bombay, Mumbai, India, in 2016 and is currently pursuing PhD in Electrical and Computer Engineering at University of Southern California. He is a recipient of USC Annenberg fellowship and his research interests include reinforcement learning and decentralized stochastic control.
\end{IEEEbiography}

\begin{IEEEbiography}{Aditya Mahajan}
(S’06-M’09-SM’14) is Associate Professor in the the department of Electrical
and Computer Engineering, McGill University, Montreal, Canada. He currently serves as
Associate Editor of Mathematics of Control, Signal, and Systems.
He is the recipient of the 2015 George
Axelby Outstanding Paper Award, 2014 CDC Best Student Paper Award (as
supervisor), and the 2016 NecSys Best Student Paper Award (as supervisor). 
His principal research interests include learning and control of centralized and decentralized stochastic systems.
\end{IEEEbiography}

\begin{IEEEbiography}{Ashutosh Nayyar}(S’09-M’11-SM’18) received the
B.Tech. degree in electrical engineering from the
Indian Institute of Technology, Delhi, India, in 2006.
He received the M.S. degree in electrical engineering
and computer science in 2008, the MS degree in
applied mathematics in 2011, and the Ph.D. degree
in electrical engineering and computer science in
2011, all from the University of Michigan, Ann
Arbor. He was a Post-Doctoral Researcher at the
University of Illinois at Urbana-Champaign and at
the University of California, Berkeley before joining
the University of Southern California in 2014. His research interests are
in decentralized stochastic control, decentralized decision-making in sensing
and communication systems, reinforcement learning, game theory, mechanism
design and electric energy systems.
\end{IEEEbiography}

\begin{IEEEbiography}{Yi Ouyang} received the B.S. degree
in Electrical Engineering from the National Taiwan
University, Taipei, Taiwan in 2009, and the M.Sc
and Ph.D. in Electrical Engineering and Computer Science at the University
of Michigan, in 2012 and 2015, respectively. He is
currently a researcher at Preferred Networks, Burlingame, CA. His research interests include reinforcement
learning, stochastic control, and stochastic dynamic
games.
\end{IEEEbiography}

\end{document}